\titlespacing{\section}{0pt}{-0.2ex plus .1ex minus .0ex}{-0.1ex plus .1ex} 
\titlespacing{\subsection}{0pt}{-.2ex plus .1ex}{-.2ex plus .1ex}  
\g@addto@macro\normalsize{   
	\setlength\abovedisplayskip{2mm}          
	\setlength\belowdisplayskip{2mm}         
	\setlength\abovedisplayshortskip{0mm}    
	\setlength\belowdisplayshortskip{0mm}     
}        
\let\c@lofdepth\relax 
\let\c@lotdepth\relax
\newtheorem{mydef}{Definition}  
\newtheorem{mythm}{Theorem} 
\newtheorem{mylemma}{Lemma} 
\newtheorem{myprop}{Proposition}   
\newtheorem*{proof}{Proof}
\begin{document}

\noindent  \Large On a Novel Skewed Generalized $t$ Distribution:
Properties,  Estimations and its Applications.   
 
\vskip 3mm     
\normalsize     

\vskip 3mm

\vskip 5mm  
\noindent  Chengdi Lian, Yaohua Rong and Weihu Cheng*   

\noindent Faculty of Science  

\noindent Beijing University of Technology

\noindent chengweihu@bjut.edu.cn   

\vskip 3mm
\noindent Key Words:   skewed  generalized $t$ distribution,  maximum likelihood estimation,  L-moments estimation, two-step estimation,   EM algorithm  
\vskip 3mm  

\doublespacing

\noindent ABSTRACT
 
With the progress of information technology, large amounts of asymmetric, leptokurtic and heavy-tailed data are arising in various fields, such as finance, engineering, genetics and medicine.         
It is very challenging to model those kinds of data,  especially for extremely skewed data, accompanied by very high kurtosis or heavy tails.             
In this paper,  we propose a class of novel skewed generalized $t$ distribution (SkeGTD) as a scale mixture of skewed generalized normal.    
The proposed SkeGTD has excellent adaptiveness to various data, because of its capability of allowing for a large range of skewness and kurtosis and its compatibility of the separated location, scale, skewness and shape parameters.             
We investigate some important properties of this family of distributions.   
The maximum likelihood estimation, L-moments estimation and two-step estimation for the SkeGTD are explored.  To illustrate the usefulness of the proposed methodology, we present simulation studies and analyze two real datasets.

\section{Introduction \label{introduction} }    
   
Skewed,  leptokurtic and heavy-tailed data have become more and more widespread with the progress of information technology, especially in several areas like finance, engineering and medicine.                  
Thus,  the traditional normality hypothesis may be inappropriate.       
The families of skewed distributions  will be logical candidates, such as skew-normal, skewed logistic and skewed $t$ distributions 
(\citet{9465733,balakrishnan2012multivariate,asgharzadeh2013approximate,arslan2009skew,garcia2010new}).         
The skewed normal distribution was first proposed by      \citet{azzalini1985class}.       
However, when dealing with extremely skewed data, especially accompanied by very high kurtosis and heavy tails,  the behavior of the skew-normal (SN) distribution  may be unsatisfactory.           

To this end,   \citet{theodossiou1998financial}  proposed the skewed generalized $t$ distribution (SGTD) to fit the  empirical distribution of financial series,  then \citet{venegas2012robust} gave its stochastic representation, Fisher information matrix, and some crucial properties.    
Following the pioneering  work of \citet{theodossiou1998financial}, many  different types of  SGTD  have been proposed.      
\citet{zhu2012asymmetric}  proposed the asymmetric generalized $t$ distribution and presented some properties of the maximum likelihood estimation (MLE).    
The multivariate extensions of the SGTD  were studied  by \citet{azzalini2003distributions},  \citet{azzalini2008robust}, \citet{ lin2010robust} and \citet{lee2014finite}, respectively.      
The SGTD has been extensively used over the last decades for modeling data in different research areas such as  regression models   
(\citet{hansen2006partially}),  financial econometrics (\citet{zhu2010generalized}), and environmental sciences (\citet{acitas2015alpha}).    

Those skewed versions of the SGTD were proposed by using different skewing mechanisms. The most common approach for generating skewed distributions is to introduce a skewness parameter, denoted as $\lambda$ ($\lambda \in [-1, 1]$), into the probability density function (pdf) of a symmetric distribution. This parameter is used to regulate both the skewness and the heaviness of the tails (\citet{ferreira2006constructive}).   
The skewed extension constructed by this approach has only one parameter to control the behavior of left and right tails, but it does not include another parameter that allows skewness to change independently of the tail parameter.  
Moreover,  the adjustable range of $\lambda$  is often limited to a finite interval, which means the existing SGTDs may not suffice to adequately handle different kinds of real data,  especially the skewed data with very high kurtosis or heavy tails.

In order to overcome this limitation,  we coin a new class of  SkeGTD models using the scale mixture approach,  which  allows  independent regulation of  location, scale, skewness and heaviness of the tails.                
The SkeGTD family nests many common  distributions as its special cases, such as Student-t,  Pareto,  Normal, Cauchy, Laplace distributions and so on.   
The main feature of the SkeGTD is that it allows for a larger range of skewness and kurtosis than existing  skewed $t$ distributions and can fit the data as adequately as possible. Thus, the proposed SkeGTD exhibits excellent adaptability to various types of data and is expected to find wide application across diverse fields. 
 
The first target of this paper is to construct the SkeGTD by applying a suitable random scale factor to the skewed generalized normal (SGN) distribution.          
In this way,   we can conveniently obtain the stochastic  representation,  which is useful to generate random numbers of SkeGTD.       	        
The second target is to discuss inferential and other statistical issues.      
Different parameter estimation methods including the MLE and the L-moments estimation (LME) are proposed for the three-parameter SkeGTD.      	 
In particular, an EM-type algorithm can be implemented based on the construction of SkeGTD, and the expressions of the Fisher information matrix are given.               
We also propose a two-step estimation (TSE) for the five-parameter SkeGTD.

The rest of this paper is organized as follows.          
In Section  \ref{sec2},  we propose the SkeGTD and investigate its several appealing properties.  
In Section  \ref{sec3}, some theoretical properties and estimation methods  are explored.   
In Section \ref{sec4},  we examine the finite-sample  performance of the proposed estimation methods by simulation.  
In Section \ref{sec5},  the analyses of two real data sets are performed to illustrate the flexibility and applicability  of the SkeGTD.              
In Section \ref{sec6},  the paper ends  with a brief conclusion,  and  detailed proofs are given in the Appendix.

\section{The Novel Skewed Generalized $t$ Distribution    }   \label{sec2}

\citet{azzalini2003distributions} proposed the skewed $t$ distribution by mixing the SN random variable (\citet{azzalini1996multivariate}) with the inverse gamma random variable.   
Let $ Y \sim SN(\lambda)$ and $ Z \sim Ga (\nu/2, \nu/2) $ be independent random variables,  where $Ga(\alpha, \beta)$ denotes   the  gamma  distribution   with mean value $ \alpha / \beta$  and variance $ \alpha / \beta^2$. Then  the location-scale version of the Azzalini-type skewed $t$ distribution can be written as $ X = \mu + \sigma Y / \sqrt{Z}   , $     
denoted by $ X \sim St(\mu, \sigma, \lambda, \nu)$. We proposed here the new SkeGTD inspired by the work of \citet{azzalini2003distributions}, and its pdf is given as follows.   
 
\subsection{Definition and  Stochastic  Representation }

\begin{mydef} \label{defsgt}  
	Let $X$ be a continuous random variable with the density     
	\begin{equation} \label{fsgt} 
		f_{SkeGTD} (x; \mu, \sigma , r, \alpha, \beta)  =  \frac{\beta} { 2\sigma(2\alpha)^{1/{\beta}}{B(\alpha,1/{\beta})} }   
		\left\{   1+\frac{|x-\mu|^{\beta}}{2\alpha \sigma ^\beta [ 1+r {\rm sign} (x- \mu )]^ { \beta}  } \right\} ^{-(\alpha+1/\beta)}, 
	\end{equation}	 
	\noindent where $ x \in \mathbb{R}$, $ B(\cdot, \cdot)$ denotes the beta function,   
	$\mu \in \mathbb{R} $, $\sigma >0 $ 	and $ |r| \le 1 $ determine location, scale and  skewness, respectively,  while  $\alpha >0 $ and $\beta>0 $ control the shape of the density function.     
	Then the random variable $X$ follows a SkeGTD, denoted by $ X \sim SkeGTD (\mu, \sigma, r, \alpha, \beta). $   
	The distribution is positively (negatively) skewed for $ r > 0~  (< 0) $.             
	The symmetric generalized $t$ distribution  proposed by \citet{math9192413} can be obtained for $ r=0$.           	            
\end{mydef}       

Next, we show that the SkeGTD  is a scale mixture of the following SGN distribution and a gamma distribution.          
A  random variable $X$ follows the SGN distribution with location parameter  $ \mu \in \mathbb{R} $,  scale parameter  $\sigma>0 $, skewness parameter $ |r| \le 1 $  and shape parameter $\beta>0 $, denoted as $X \sim SGN(\mu, \sigma , r, \beta) $,  if it has the density   
\begin{equation}  
	f_{SGN} (x; \mu, \sigma , r, \beta)=\frac{\beta}{2^{1+1/{\beta}} \Gamma(1/{\beta}) \sigma } \exp \left\{ -\frac{|x-\mu|^{\beta}}{2 \sigma ^\beta [ 1+ r {\rm sign} (x-\mu) ]^\beta } \right\}, \ \   x \in \mathbb{R}. \label{ fsgn }  
\end{equation}    

\begin{mythm} \label{thmsgt.repre}  
	
	Let  $ K \sim SGN(0, 1 , r, \beta) $, $Z \sim  Ga(\alpha,\alpha)$,  $K $ and $ Z$ are independent,  and  
	\begin{equation}  
		X= \mu +  \sigma  K Z^{ -1/ \beta} , \label{sgt.repre} 
	\end{equation} 
	then  $ X \sim SkeGTD (\mu, \sigma, r, \alpha, \beta)$ with the pdf given by (\ref{fsgt}), 
	\noindent where the parameters   $ \mu \in \mathbb{R}, \sigma > 0, |r| \le 1,\alpha>0 $ and $  \beta>0.$    
\end{mythm}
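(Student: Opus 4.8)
The plan is to obtain the density of $X$ directly from the stochastic representation (\ref{sgt.repre}), reading $X=\mu+\sigma K Z^{-1/\beta}$ as a scale mixture and integrating out the $Ga(\alpha,\alpha)$ variable $Z$. Since $\sigma>0$, a location--scale change of variables reduces the problem to finding the density of $W:=K Z^{-1/\beta}$, because $f_X(x)=\sigma^{-1}f_W\big((x-\mu)/\sigma\big)$, and the identities $\mathrm{sign}\big((x-\mu)/\sigma\big)=\mathrm{sign}(x-\mu)$ and $|(x-\mu)/\sigma|^{\beta}=|x-\mu|^{\beta}/\sigma^{\beta}$ reinstate the $\mu$ and $\sigma$ dependence of (\ref{fsgt}).

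First I would condition on $Z=z$. As $z^{-1/\beta}>0$, the map $k\mapsto k z^{-1/\beta}$ is an increasing scaling, so $f_{W\mid Z}(w\mid z)=z^{1/\beta}f_{SGN}(w z^{1/\beta};0,1,r,\beta)$. The decisive point is that scaling by a positive factor preserves the sign, i.e. $\mathrm{sign}(w z^{1/\beta})=\mathrm{sign}(w)$, whereas $|w z^{1/\beta}|^{\beta}=|w|^{\beta}z$; this lets the piecewise skewing factor $[1+r\,\mathrm{sign}(\cdot)]^{\beta}$ pass through the mixing integral as a constant in $z$. Substituting the $SGN(0,1,r,\beta)$ density then yields
\[
 f_{W\mid Z}(w\mid z)=\frac{\beta\,z^{1/\beta}}{2^{1+1/\beta}\Gamma(1/\beta)}\exp\!\left\{-\frac{|w|^{\beta}z}{2[1+r\,\mathrm{sign}(w)]^{\beta}}\right\}.
\]
Forming $f_W(w)=\int_0^{\infty}f_{W\mid Z}(w\mid z)\,f_Z(z)\,dz$ with $f_Z(z)=\frac{\alpha^{\alpha}}{\Gamma(\alpha)}z^{\alpha-1}e^{-\alpha z}$ and collecting the powers of $z$ and the two exponentials turns the integrand into $z^{\alpha+1/\beta-1}e^{-bz}$, where $b=\alpha+|w|^{\beta}/\big(2[1+r\,\mathrm{sign}(w)]^{\beta}\big)$.

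Then I would evaluate the integral by the standard gamma formula $\int_0^{\infty}z^{a-1}e^{-bz}\,dz=\Gamma(a)/b^{a}$ with $a=\alpha+1/\beta$, which converges because $\alpha,1/\beta>0$. The remaining work is bookkeeping: factoring $\alpha$ out of $b$ produces the bracket $\{1+|w|^{\beta}/(2\alpha[1+r\,\mathrm{sign}(w)]^{\beta})\}^{-(\alpha+1/\beta)}$ of (\ref{fsgt}), and the leftover constant $\beta\,\alpha^{-1/\beta}\Gamma(\alpha+1/\beta)/\big(2^{1+1/\beta}\Gamma(1/\beta)\Gamma(\alpha)\big)$ collapses to $\beta/\big(2(2\alpha)^{1/\beta}B(\alpha,1/\beta)\big)$ via the Beta--Gamma identity $B(\alpha,1/\beta)=\Gamma(\alpha)\Gamma(1/\beta)/\Gamma(\alpha+1/\beta)$ together with $2^{1+1/\beta}\alpha^{1/\beta}=2(2\alpha)^{1/\beta}$; the $\sigma^{-1}$ Jacobian from the location--scale step then supplies the final normalizing constant $\beta/\big(2\sigma(2\alpha)^{1/\beta}B(\alpha,1/\beta)\big)$. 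I expect the only genuine obstacle to be handling the sign-dependent skewing factor correctly across the conditioning and mixing steps; once the positivity-preserves-sign observation is in place, the rest is a routine gamma-integral evaluation and constant-matching with no convergence difficulties.
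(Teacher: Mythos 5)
Your proposal is correct and follows essentially the same route as the paper: reduce to the normalized variable, integrate the conditional $SGN$ density against the $Ga(\alpha,\alpha)$ mixing density, and evaluate the resulting gamma integral, with the Beta--Gamma identity supplying the normalizing constant. The only cosmetic difference is that you integrate directly in $z$, whereas the paper first passes to $U=Z^{1/\beta}$ and then substitutes $v=u^{\beta}\varrho$; the two computations are identical in substance.
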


\begin{proof} 	 
	A random variable $ X_0 = (X-\mu) / \sigma $   is distributed as the normalized SkeGTD.   
	It suffices to verify the pdf of $ X_0 $.     
	Let $ U =  Z^{ 1/ \beta} $ and 
	use the transformation $ v=  u ^ \beta \varrho $ for the following integral   
	\begin{eqnarray}  
		f _{ X_0 } (x) &=& \  \int_{0}^{\infty}   
		f _K (xu) f _U (u) u\  \mathrm{d} u 
		=  \  \frac{\beta^2 \alpha ^ \alpha }
		{2 ^ {1 + 1 / \beta} \Gamma (\alpha)  \Gamma (1 / \beta) }
		\int_{0}^{\infty} u ^ {\alpha \beta } 
		\exp \{  - u ^ \beta \varrho   \}  \  \mathrm{d} u,    \notag  \\
		&=& \  \frac{\beta}
		{   2 ^{1 + 1/{\beta} }  \alpha ^{1/{\beta}} B(\alpha,1/{\beta}) }       
		\left\{ 1+\frac{|x|^{\beta}}{2\alpha  [ 1+r {\rm sign}(x) ]^ { \beta}  } \right\}^  {-(\alpha+1/\beta)} , \  \ x \in \mathbb{R}, 	 \label{fx0}  
	\end{eqnarray}   
	where $ \varrho= \alpha + |x| ^ \beta /  \{  2 [ 1 + r {\rm sign}(x) ]   ^ \beta \}   $. Then we can obtain the pdf of  $  X_0  $.    
\end{proof}

To illustrate the effect of $  r, \alpha,  $ and $\beta$ on the distribution,   we plot different density curves of the normalized SkeGTD in Figure~\ref{plotf}.            
Figure~\ref{plotf}(a)  shows  a symmetric distribution $(r = 0)$ and its skewed counterparts for $r = -0.9, -0.5, $  and 0.5, respectively,  with $\alpha = \beta = 2$.    
From Figure~\ref{plotf}(a), it is observed that 
the SkeGTD has a right skewness for $ r > 0 $ and a left skewness for  $ r < 0 $, respectively.       
Note that when the sign of $r$ is reversed,  the density is mirrored on the opposite side of the vertical axis  and the skewness increases with the value of $|r|$.     
The peakedness and tail heaviness of the density curve is controlled by $\alpha $ and $\beta$ simultaneously.    
The two parameters are included for different purposes.      
Figure~\ref{plotf}(b) shows that,   given $r$ and $\beta$, the density becomes leptokurtic  as $\alpha $  increases.  
We  notice that  the parameter $\alpha$ mainly regulates the tail behavior of the density.     
Figure~\ref{plotf}(c) shows how the shape of the density curve changes from steep to smooth  when  $\beta$  is from small to large.     
The parameter $\beta$ plays a decisive role in the shape when $x$ is near the location and   
larger values of $\beta $ are associated with flatter peaks and thinner tails of the density.

\subsection{Properties}

Next, we give some theoretical properties of the SkeGTD, including the stochastic representation,  moments, skewness and kurtosis coefficients, and cumulative distribution function (cdf).     
We begin by discussing the stochastic representation of the SGN distribution.

\begin{mylemma}\label{sgnrandom}   
	If $ X \sim  SGN(\mu, \sigma , r, \beta) $, then $ X \stackrel{d}{=} \mu +  \sigma  2^{1/ \beta}  W Y^{1/ \beta} $, where $Y \sim Ga(1/\beta, 1) $ and $W$ is an independent discrete variable which satisfies
	\begin{equation}
		P \left(  W=r+1 \right)  =  (r+1)/2 , ~~ P \left( W=r-1 \right)  = (1-r)/2 ,   
	\end{equation}   
	where $  -\infty< \mu <\infty, \sigma>0, |r| \le 1,\alpha>0 $ and $\beta>0. $ 
\end{mylemma}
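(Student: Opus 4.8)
The plan is to reduce to the standardized variable and then identify the law of the proposed product by a direct change of variables, matching it against $f_{SGN}$ piece by piece. Since $X = \mu + \sigma K$ with $K \sim SGN(0,1,r,\beta)$ and the claimed representation is location--scale equivariant, it suffices to prove $K \stackrel{d}{=} V$, where $V := 2^{1/\beta} W Y^{1/\beta}$. First I would record the target density obtained from (\ref{ fsgn }) at $\mu=0,\sigma=1$,
\[
f_K(k) = \frac{\beta}{2^{1+1/\beta}\Gamma(1/\beta)}\exp\left\{-\frac{|k|^{\beta}}{2[1 + r\,\mathrm{sign}(k)]^{\beta}}\right\},
\]
and observe that it is a two-piece (split) density: the effective scale is $1+r$ on $\{k>0\}$ and $1-r$ on $\{k<0\}$. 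This makes the discrete two-point variable $W$ the natural carrier of the sign-and-scale information, while $Y$ supplies the magnitude.

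The core computation is to condition on $W$. Because $|r|\le 1$ forces $r+1\ge 0$ and $r-1\le 0$, and $Y^{1/\beta}>0$ almost surely, the event $\{W=r+1\}$ confines $V$ to $(0,\infty)$ and the event $\{W=r-1\}$ confines $V$ to $(-\infty,0)$; the two halves never overlap and can be handled separately. On $\{W=r+1\}$ I apply the monotone map $y \mapsto 2^{1/\beta}(1+r)y^{1/\beta}$ to the $Ga(1/\beta,1)$ density $f_Y(y)=\Gamma(1/\beta)^{-1}y^{1/\beta-1}e^{-y}$. The Jacobian cancels the power-of-$y$ factor, leaving a conditional density proportional to $\exp\{-v^{\beta}/[2(1+r)^{\beta}]\}$ on $v>0$, with constant $\beta/[2^{1/\beta}(1+r)\Gamma(1/\beta)]$. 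Multiplying by $P(W=r+1)=(1+r)/2$ cancels the factor $1+r$ in the denominator and reproduces exactly $f_K$ on the positive half-line. The negative half-line is treated identically with $|W|=1-r$ and weight $(1-r)/2$.

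I expect the only delicate point to be the sign bookkeeping on the negative branch, where $W=r-1\le 0$: one must use $|W|=1-r$ in the Jacobian and check that for $v<0$ the exponent reduces to $-|v|^{\beta}/[2(1-r)^{\beta}]$, matching $1 + r\,\mathrm{sign}(k)=1-r$. Once it is seen that the two-point weights $(1\pm r)/2$ precisely cancel the scale factors $1\pm r$ introduced by the change of variables, the weighted conditional densities glue into $f_K$ across $\mathbb{R}$, giving $V\stackrel{d}{=}K$ and hence the stated representation. As a sanity check one may verify that $\int_0^\infty f_K = (1+r)/2$ and $\int_{-\infty}^0 f_K = (1-r)/2$, confirming that these are indeed the correct masses assigned to the two values of $W$.
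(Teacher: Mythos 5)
Your proposal is correct and follows essentially the same route as the paper: condition on the two-point variable $W$, note that the two branches occupy disjoint half-lines, and transform the $Ga(1/\beta,1)$ law under the monotone map so that the weights $(1\pm r)/2$ cancel the scale factors $1\pm r$. The only cosmetic difference is that the paper writes the mixture at the level of the cdf and differentiates, whereas you transform the conditional densities directly (and, unlike the paper's displayed cdf, you carry the $2^{1/\beta}$ factor explicitly).
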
        

Note that if  $x \ge \mu$,   	    
	\begin{equation}                  
	 P\left( X\le x \right) = \  \frac{r+1}{2}  P\left\{ (r+1) Y^{1/ \beta}   \le (x- \mu)/\sigma  \right\} + \frac{1-r}{2}.    \label{pxx}  
	\end{equation}   
	
	\noindent We have the density (\ref{ fsgn }) by the derivative of Equation (\ref{pxx})  with respect to $x$.       
	The case of   $ x \le \mu $ will be obtained as claimed.     

Proposition \ref{SGTrandom} presents additional results for SkeGTD, which are valuable for sample generation and the development of the EM-type algorithm.

\begin{myprop} \label{SGTrandom}  
	Let $ X \sim SkeGTD(\mu, \sigma , r, \alpha, \beta) $, then 	
	\begin{equation}  
		X \stackrel{d}{=} \mu +  \sigma 2^{1/ \beta}  W Y^{1/ \beta}  Z^{ 1/ \beta}, \label{structure}  
	\end{equation}   
	\noindent where $W, Y$ and $Z$ are independent random variables, $ W $ and $ Y $ are defined as in Lemma \ref{sgnrandom}, and $ Z \sim IG(\alpha,\alpha) $, where $ IG(\cdot) $ denotes the inverse gamma distribution.      
\end{myprop}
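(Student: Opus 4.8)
The plan is to chain together the two representations already established: the scale-mixture representation of Theorem~\ref{thmsgt.repre}, which writes a $SkeGTD$ variable in terms of an $SGN$ variable and a gamma variable, and the discrete-mixture representation of the $SGN$ variable in Lemma~\ref{sgnrandom}. The only substantive conversion needed afterward is to turn the negative power of a gamma variable into a positive power of an inverse gamma variable.

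First I would invoke Theorem~\ref{thmsgt.repre} to write $X \stackrel{d}{=} \mu + \sigma K Z_0^{-1/\beta}$, where $K \sim SGN(0,1,r,\beta)$ and $Z_0 \sim Ga(\alpha,\alpha)$ are independent; I rename the gamma factor $Z_0$ so that the symbol $Z$ can be reserved for the inverse gamma variable in the statement. Next, applying Lemma~\ref{sgnrandom} to $K$ (with location $0$ and scale $1$) gives $K \stackrel{d}{=} 2^{1/\beta} W Y^{1/\beta}$ with $Y \sim Ga(1/\beta,1)$, $W$ the two-point variable described there, and $W, Y$ independent. Substituting yields
\begin{equation}
	X \stackrel{d}{=} \mu + \sigma 2^{1/\beta} W Y^{1/\beta} Z_0^{-1/\beta}. \notag
\end{equation}
I would then write $Z_0^{-1/\beta} = (Z_0^{-1})^{1/\beta}$ and set $Z := Z_0^{-1}$. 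Under the shape--rate parametrization used throughout, $Z_0 \sim Ga(\alpha,\alpha)$ implies $Z = 1/Z_0 \sim IG(\alpha,\alpha)$ by the standard reciprocal-of-gamma density transformation, which delivers exactly $X \stackrel{d}{=} \mu + \sigma 2^{1/\beta} W Y^{1/\beta} Z^{1/\beta}$.

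It remains to confirm the mutual independence of $W$, $Y$ and $Z$. Here $Z_0$ is independent of $K$ by Theorem~\ref{thmsgt.repre}, and the pair $(W,Y)$ is constructed solely from $K$ in Lemma~\ref{sgnrandom}, so $(W,Y)$ is jointly independent of $Z_0$; since $W$ and $Y$ are mutually independent by that lemma, and $Z = 1/Z_0$ is a measurable function of $Z_0$, independence of $(W,Y)$ from $Z$ follows. Thus $W, Y, Z$ are mutually independent, completing the argument. The only genuine content beyond this bookkeeping is the reciprocal-of-gamma identity $1/Z_0 \sim IG(\alpha,\alpha)$, which is a one-line change of variables; consequently I expect no real obstacle, and the main care required is simply tracking the two parametrizations and the independence structure correctly as the variables are renamed.
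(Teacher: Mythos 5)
Your proposal is correct and follows exactly the route the paper indicates: it notes that Proposition~\ref{SGTrandom} is obtained by combining the scale-mixture representation of Theorem~\ref{thmsgt.repre} with the discrete-mixture representation of Lemma~\ref{sgnrandom}, together with the standard fact that the reciprocal of a $Ga(\alpha,\alpha)$ variable is $IG(\alpha,\alpha)$. Your write-up merely makes explicit the independence bookkeeping that the paper leaves implicit.
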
   
Proposition \ref{SGTrandom} can be intuitively derived from Theorem \ref{thmsgt.repre} and Lemma \ref{sgnrandom}. This property offers advantages for inferring moments of $X$ and generating random numbers. Now let us consider the moments of SkeGTD. Let $ X_0 = (X-\mu) / \sigma $, for any positive integer $  k < \alpha\beta  $, then 
\begin{equation} 
	E(X^k) 
	= \sum_{i=0}^{k} \binom {k} {i}  \sigma^{i} \mu^{k-i} E( X_0 ^i) , \label{Esgt} 
\end{equation}  
where 	$  E( X_0 ^i)  $ is given by    
\begin{equation}
	E( X_0 ^i) 
	= (2 \alpha)^ {i/\beta}  	\frac{\Gamma (\alpha- i/\beta ) \Gamma ( (i+1)/\beta )}{2 \Gamma(\alpha)  \Gamma(1/\beta)} \left[  (r+1)^{i+1} - (r-1)^{i+1} \right].  \label{Exok}  
\end{equation}  
Formula (\ref{Exok}) is derived using 
$ E(X_0^i)  =    2^{i/ \beta} E(W^i)  E( Y^{i/ \beta} ) E( Z^{ i/ \beta} ) $. 	       
With some basic algebraic manipulations,  we have        
	\begin{align}   
		E(X; \mu, \sigma, r, \alpha, \beta) =& \ \mu + 2 (2 \alpha)^ {1/\beta} \sigma  r	\frac{B (\alpha- 1/\beta, 2/\beta)}{ B(\alpha,1/\beta)} ,   	\\                 
		Var(X; \sigma, r, \alpha, \beta) =& \        
		\frac{ \sigma ^2 (2 \alpha)^ {2/\beta}}{  B(\alpha,1/\beta) } \left\{  (3 r ^2 +1 ) B (\alpha- 2/\beta, 3/\beta) - \frac{4 r^2 B^2 (\alpha- 1/\beta, 2/\beta)} {B (\alpha, 1/\beta)}\right\} ,  \label{sgtvar}    
          \\   
		\gamma_1(X; r,\alpha,\beta) 
		=&  \ \frac{ 2 r  B ^ {1/2} (\alpha, 1/\beta) }{ [H(r, \alpha, \beta)] ^{3/2}  }   \left\{    r^2 H_1(\alpha, \beta)+ H_2(\alpha, \beta)  \right\} ,    \label{g1}  
		\\    
		\gamma_2(X; r,\alpha,\beta) 
		=&  \  \frac{ B(\alpha,  1/\beta )  }{ [H(r, \alpha, \beta)] ^{2}  }   \left\{     r^4 H_3(\alpha, \beta)+  r^2 H_4(\alpha, \beta) + B(\alpha- 4/\beta ,  5/\beta)    \right\}       
		-3 ,      \label{g2}           
	\end{align}  
in the cases of $ \alpha \beta >1, ~ \alpha \beta >2,~ \alpha \beta >3$ and $  \alpha \beta >4$,  where    
	\begin{align}      
		H(r, \alpha, \beta) = \  (& 3 r^2 +1 ) B(\alpha-2/\beta, 3/\beta) -\frac{ 4 r^2  B ^ {2} (\alpha-1/\beta , 2/\beta)   }{B(\alpha, 1/\beta )}  , 
		\notag \\  
		H_1(\alpha, \beta) =  \   2&  B(\alpha-3/\beta , 4/\beta) - 9 \frac{B(\alpha-1/\beta , 2/\beta)\ B(\alpha-2/\beta , 3/\beta)}{B(\alpha , 1/\beta)}   
		+  8 \frac{ B  ^ 3 (\alpha-1/\beta , 2/\beta)  }{ B ^2 (\alpha-1/\beta , 2/\beta)  } ,\notag \\
		H_2(\alpha, \beta) =  \  2&  B(\alpha-3/\beta , 4/\beta) - 3 \frac{B(\alpha-1/\beta , 2/\beta)\ B(\alpha-2/\beta , 3/\beta)} {B(\alpha , 1/\beta)},  \notag  \\
		H_3(\alpha, \beta) =  \  5&  B(\alpha-4/\beta , 5/\beta) - 2^5 \frac{B(\alpha-1/\beta , 2/\beta)\ B(\alpha-3/\beta , 4/\beta)}{B(\alpha , 1/\beta)}   \notag  \\ 
		+&  3^2 \cdot 2^3\  \frac{  B  ^ 2 (\alpha-1/\beta , 2/\beta) B(\alpha-2/\beta , 3/\beta) }{  B  ^ 2 (\alpha, 1/\beta)  }  -3\cdot2^4 \frac{ B ^ 4 (\alpha-1/\beta , 2/\beta)  }{ B ^ 3 (\alpha, 1/\beta) }, \notag \\   
		H_4(\alpha, \beta) = \   10&  B(\alpha-4/\beta , 5/\beta) - 2^5 \frac{B(\alpha-1/\beta , 2/\beta)\ B(\alpha-3/\beta , 4/\beta)}{B(\alpha , 1/\beta)}    \notag  \\ 
		+&  3 \cdot 2^3\  \frac{  B ^ 2 (\alpha-1/\beta , 2/\beta)  B(\alpha-2/\beta , 3/\beta) }{ B ^ 2 (\alpha, 1/\beta) }, \notag 		 
	\end{align} 
 and $\gamma_1(\cdot)$ and $\gamma_2(\cdot)$ are the measures of skewness and kurtosis, respectively.

Note that   
$	\gamma_i(X;r,\alpha,\beta) =  \  \gamma_i(X_0;r,\alpha,\beta) $,
$ i=1, 2  $.  Formula (\ref{Exok}) allows us to derive formulas (\ref{g1}) and (\ref{g2}). 
The following properties of $\gamma_1(\cdot)$ and $\gamma_2(\cdot)$ can be numerically verified:         
(a) Both $\gamma_1(\cdot)$ and $\gamma_2(\cdot)$ monotonically decrease with respect to $\alpha$ and $\beta$ for any given $r > 0$ and increase monotonically with respect to $|r|$ for given $\alpha$ and $\beta > 0 $.        
(b)  $\gamma_1(X;r,\alpha,\beta) = -\gamma_1(X;-r,\alpha,\beta)  $, 
$\gamma_2(X;r,\alpha,\beta) =  \gamma_2(X;-r,\alpha,\beta)  $,   
$  \lim\limits_{ \alpha \beta  \rightarrow 3  } |\gamma_1(X;r,\alpha,\beta) | =  + \infty $ ,   
$ \lim\limits_{  \alpha \beta  \rightarrow 4 }  \gamma_2(X;r,\alpha,\beta)  = + \infty $,  
and 
$ \lim\limits_{  \beta  \rightarrow +\infty }  \gamma_2(X;r,\alpha,\beta)  = -1.2  $.                
Both  $ \gamma_1(\cdot)$ and $ \gamma_2(\cdot)$ have wide ranges.  
Therefore, the SkeGTD can capture a broader range of asymmetric and leptokurtic behaviors,  making it a more flexible option for real data analysis.

By integrating  density (\ref{fx0})  with respect to $x$ and using the transformation $  u  =  \{ 1+  |x| ^ \beta / \{  2 \alpha  \left[   1+ r {\rm sign}(x) \right] ^ \beta \}  \}  ^ {-1} $,   the cdf of SkeGTD can be given by       
	\begin{eqnarray}  \label{FX}     
		\begin{aligned}
			F_{X_0} (x; r, \alpha, \beta ) = 
			\begin{dcases}  
				\left( \frac{1-r}{2} \right) I_{u(x; r, \alpha, \beta)} \left( \alpha, \frac{1} { \beta }  \right) , & x\le  0,  \\
				1-\left( \frac{1+r}{2} \right) I_{u(x; r, \alpha, \beta)}  \left( \alpha, \frac{1} { \beta }  \right) , & x > 0, 
			\end{dcases}	 	 
		\end{aligned}
	\end{eqnarray}           
	\noindent where $ I_y (a,b) = \frac{1}{ B(a,b)} \int_{0}^{y}\ t^{a-1} (1-t)^{b-1} \mathrm{d}t  $ denotes the  incomplete beta function and    
	\( 
	u(x; r, \alpha, \beta) =  \{    1+  |x| ^ \beta / \{  2 \alpha  \left[   1+ r {\rm sign}(x) \right] ^ \beta \}  \}  ^ {-1}.      
	\)

Further, we can derive an explicit expression for the cdf of a random variable $X$ by using   $  F_{X} (x; \mu, \sigma, r, \alpha, \beta ) = F_{ X_0} ((x- \mu )/\sigma;  r, \alpha, \beta )    $.        
The special and limiting cases of the $ SkeGTD(\mu, \sigma, r, \alpha, \beta)  $   are given as follows.

\begin{myprop} \label{ppp4}
	
	Let $ X \sim SkeGTD(\mu, \sigma, r, \alpha, \beta) $, then  the following results hold:   

	\begin{compactitem}      
		\item[(1).]   
		If $  \alpha \rightarrow \infty $,  then $ X \rightarrow SGN(\mu, \sigma, r,  \beta)$;
		\item[(2).]   
		If $ \mu=0,~ \sigma=1,~ r=0, ~ \alpha= n/2 $ and $ \beta=2 $,  then   $ X \sim t(n)$; 
		\item[(3).]   
		If $ r=1\  and\   \beta=1 $, then $ X \sim  Pareto(\uppercase\expandafter{\romannumeral2}) ( \mu, 2\sigma\alpha, \alpha )$;  
		\item[(4).]   
		If $ \alpha \rightarrow \infty,~ r=0 $ and $ \beta =2 $ ,  then 
		$ X \sim N(\mu, \sigma^2)$; 
		\item[(5).] 
		If  $ \alpha \rightarrow \infty,~ r=0 $  and $ \beta=1 $ , then 
		$ X \sim  Laplace (\mu, 2 \sigma )$;   
		\item[(6).]  
		If  $  \alpha \rightarrow \infty,~  r=0 $  and $ \beta \rightarrow \infty $ ,  then  
		$ X \sim U  [\ \mu-\sigma , \mu+\sigma \ ]$;   
		\item[(7).]      
		If  $  r=0, ~\alpha= 1/2  $  and $ \beta =2  $,  then 
		$ X \sim Cauchy(\mu, \sigma) $.                 
	\end{compactitem}     
\end{myprop}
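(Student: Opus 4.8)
The plan is to split the seven items into two groups according to the tool each needs: the exact special cases (2), (3), and (7), which follow by direct substitution into the density (\ref{fsgt}), and the limiting cases (1), (4), (5), and (6), for which the stochastic representation of Theorem \ref{thmsgt.repre} is the cleanest vehicle.

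For the substitution group I would insert the prescribed parameter values into (\ref{fsgt}), collapse the normalizing constant with elementary Beta--Gamma identities, and match the result against the canonical target density. Concretely, for (2) set $r=0,\alpha=n/2,\beta=2$, so the sign term disappears and the constant reduces to $1/[\sqrt{n}\,B(n/2,1/2)]$, reproducing the Student-$t_n$ kernel $\{1+x^2/n\}^{-(n+1)/2}$. For (7) set $r=0,\alpha=1/2,\beta=2$ and use $B(1/2,1/2)=\pi$ to obtain $\sigma/\{\pi[\sigma^2+(x-\mu)^2]\}$, the Cauchy density. For (3) set $r=1,\beta=1$; here the key observation is that for $x<\mu$ the factor $[1+r\,{\rm sign}(x-\mu)]^\beta$ vanishes, forcing $f\equiv 0$ and hence support $[\mu,\infty)$, whereas for $x>\mu$ the identity $B(\alpha,1)=1/\alpha$ turns (\ref{fsgt}) into a Lomax/Pareto-II kernel $\{1+(x-\mu)/(\mathrm{scale})\}^{-(\alpha+1)}$ with the stated shape $\alpha$, and matching the leading constant pins down the scale.

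For the limiting cases I would argue through the representation (\ref{sgt.repre}). Since $Z\sim Ga(\alpha,\alpha)$ has mean $1$ and variance $1/\alpha\to 0$, we have $Z\stackrel{P}{\to}1$, hence $Z^{-1/\beta}\stackrel{P}{\to}1$ by the continuous mapping theorem; Slutsky's theorem then gives $X=\mu+\sigma K Z^{-1/\beta}\stackrel{d}{\to}\mu+\sigma K$, and since $K\sim SGN(0,1,r,\beta)$ the location--scale closure of the SGN family yields $X\to SGN(\mu,\sigma,r,\beta)$, establishing (1). Parts (4) and (5) then follow immediately by specializing this SGN limit: with $(r,\beta)=(0,2)$ the SGN density in (\ref{ fsgn }) collapses (via $\Gamma(1/2)=\sqrt\pi$) to the $N(\mu,\sigma^2)$ density, and with $(r,\beta)=(0,1)$ it collapses to $\frac{1}{4\sigma}\exp\{-|x-\mu|/(2\sigma)\}$, the $Laplace(\mu,2\sigma)$ density.

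Part (6) is where I expect the main difficulty, since it is an iterated limit: after reducing to $SGN(\mu,\sigma,0,\beta)$ via (1), one must still send $\beta\to\infty$ through a degenerating exponential. I would handle this by pointwise density convergence plus Scheff\'e's lemma: in (\ref{ fsgn }) with $r=0$ the kernel $\exp\{-\tfrac12|(x-\mu)/\sigma|^\beta\}$ tends to $\mathbf{1}_{\{|x-\mu|<\sigma\}}$, while the constant $\beta/[2^{1+1/\beta}\Gamma(1/\beta)\sigma]$ tends to $1/(2\sigma)$ because $\Gamma(1/\beta)\sim\beta$ as $\beta\to\infty$, giving the $U[\mu-\sigma,\mu+\sigma]$ density. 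Equivalently, Lemma \ref{sgnrandom} shows $Y^{1/\beta}\stackrel{d}{\to}U[0,1]$ and $2^{1/\beta}W\to\pm 1$, producing the uniform law directly. The recurring obstacle across (1) and (4)--(6) is the rigorous control of the normalizing constants, namely the Beta- and Gamma-function asymptotics via Stirling's formula, together with the legitimacy of the double limit in (6); the stochastic-representation route removes this burden for (1), (4), (5), but (6) still requires the explicit asymptotic $\Gamma(1/\beta)\sim\beta$.
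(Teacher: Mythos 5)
Your proposal is correct, and its overall architecture mirrors the paper's: the exact cases (2), (3), (7) (and (4), (5) after the limit) are dispatched by substitution into the density (\ref{fsgt}) or (\ref{ fsgn }), while the real work sits in the limits (1) and (6), which are exactly the two items the paper proves explicitly. The genuine difference is in how you establish (1). The paper argues analytically on the density: for fixed $x$ the kernel $\{1+c/\alpha\}^{-(\alpha+1/\beta)}$ tends to $e^{-c}$, giving the SGN kernel up to proportionality. You instead go through the stochastic representation of Theorem \ref{thmsgt.repre}: $Z\sim Ga(\alpha,\alpha)$ has variance $1/\alpha\to 0$, so $Z^{-1/\beta}\stackrel{P}{\to}1$ and Slutsky gives $X\stackrel{d}{\to}\mu+\sigma K$. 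Your route buys something concrete: the paper's argument, as written, only controls the kernel and silently needs $\alpha^{1/\beta}B(\alpha,1/\beta)\to\Gamma(1/\beta)$ to recover the normalizing constant (or an appeal to Scheff\'e), whereas the Slutsky argument avoids all constant-tracking at the cost of delivering only convergence in distribution rather than pointwise density convergence --- which is all the proposition asserts. For (6) the two proofs coincide in substance (pointwise convergence of the $r=0$ SGN density); you evaluate the constant via $\Gamma(1/\beta)=\beta\,\Gamma(1+1/\beta)\sim\beta$, which is a more direct computation than the paper's detour through the reflection formula $\Gamma(t)\Gamma(1-t)=\pi/\sin(\pi t)$, and your explicit invocation of Scheff\'e's lemma (and the alternative via $Y^{1/\beta}\stackrel{d}{\to}U[0,1]$ from Lemma \ref{sgnrandom}) is a rigor upgrade the paper omits. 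One cosmetic note: in (3) your computation will produce a Lomax scale of $4\alpha\sigma$ (since $[1+\mathrm{sign}]^{\beta}=2$ enters the denominator alongside $2\alpha\sigma^{\beta}$), so be aware the stated scale $2\sigma\alpha$ depends on the Pareto-II parametrization adopted; this is an issue with the statement, not with your method.
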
 

\begin{proof}   
	The proof is given in Appendix \ref{appen.prop2}. 
\end{proof}

\section{Parameter Estimation }  \label{sec3}   

In this section, for the normalized SkeGTD ($\mu=0$, $\sigma=1$), we first obtain the parameter estimators using the likelihood method and derive the fisher information matrix. To enhance the accuracy of shape parameters estimation with small sample sizes, we also investigate L-moments estimation. Additionally, we introduce a two-step estimation method for the five-parameter SkeGTD.

\subsection{ Maximum Likelihood Estimation }       
            
Now, let's consider the MLE of the parameters of the normalized SkeGTD using the Expectation-Maximization (EM) algorithm. The EM algorithm is commonly employed to determine MLE, and it is a powerful computational technique, especially useful in situations involving missing data. In summary, the algorithm consists of two steps: the E-step, which involves computing the conditional expectation of the complete-data log-likelihood, and the M-step, which maximizes the expected value with respect to the unknown parameters. This process iterates between the two steps until a convergence criterion is met (\citet{chen1999new, aas2006generalized, lin2007robust}).  

Let $y_1, \ldots, y_n$ be a random sample of size $n$ from $SkeGTD(0, 1, r, \alpha, \beta)$, and let $\boldsymbol{\theta} = (r, \alpha, \beta)^{T}$. Based on the stochastic representation in Theorem \ref{thmsgt.repre},  we have $y_i \stackrel{d}{=} x_i z_i^{-1/\beta}$, where $x_i \sim SGN(0,1, r, \beta)$ are independent of $z_i \sim Ga(\alpha, \alpha)$,  for $i=1, \ldots, n$. We consider $\boldsymbol{y} = (y_1, \ldots, y_n)^{T}$ as the observed data, while $\boldsymbol{z} = (z_1, \ldots, z_n)^{T}$ as the missing data. Note the following hierarchical representation:            
\begin{equation}
	y_i~|~ z_i \sim SGN(0, z_i ^{-1/ \beta } , r, \beta) ,   ~~~
	z_i    \sim  Ga(\alpha, \alpha ) ,  \ \ \ \ \  \ \ 
	i = 1,  \dots , n.     \label{represent}    
\end{equation}   
From (\ref{represent}), the joint pdf of $ (y_i, z_i ) $ is given by
	\begin{equation}  
	f_{ y,z } \left( y, z|\ \boldsymbol{\theta}  \right)
	\propto  \ 
	z^{ \alpha + 1/ \beta -1 }  \exp \left\{  -z \left[    \alpha +  \frac{ |y|^{\beta} }  {2 \alpha (1+r {\rm sign}(y))^ { \beta} }  \right] \right\},  \label{fyz}    
\end{equation} 
\noindent where we omit the index $i$ for brevity.     
Dividing (\ref{fyz}) by (\ref{fx0}), we obtain the conditional distribution of $z_i$ given $\boldsymbol{\theta}^t$ and $y_i$ as 
\begin{equation}  
	z_i \ |(\boldsymbol{\theta}^t , y_i) \sim Ga( a^t, b_i^t ) \label{zcy} ,       
\end{equation}   
where 
$ 
a^t =\ \alpha^t + 1/ \beta^t ,   ~~~   
b_i ^t =\  \alpha^t +  |y_i|^{\beta^t} \big/ \{ 2  \alpha^t [ 1+r^t {\rm sign}(y_i)] ^ { \beta^t}  \}  ,     
$
and  $ \boldsymbol{\theta} ^{t} $ denotes the current estimate  $\boldsymbol{\theta}$ at the $t$-th step.

Then the complete likelihood for $ (\boldsymbol{y}, \boldsymbol{z})$ is 
\begin{equation} 
	\begin{split}
		\ell( \boldsymbol{\theta}; \boldsymbol y, \boldsymbol z)  
		= 
		n& \log \left\{   \frac{ \alpha ^ \alpha \beta}
		{  2^ {1 + 1/ \beta}  
			\Gamma(\alpha)  \Gamma (1/ \beta) }  \right\}   
		+  (\alpha + 1/\beta -1 )  \sum_{i=1}^{n} \log \left( z_i \right)  
		  \\ 
		-&   \  \sum_{i=1}^{n} z_i  \left\{    \alpha + \frac{ |y_i|^{\beta} }  {2 \left[   1+ r {\rm sign}(y_i) \right] ^ { \beta} }  \right\}  , 
		\ \ \ ~~~  y_i \in \mathbb{R},~~  z_i >0, \ i=1 ,\dots n.   \label{logfyz}  
	\end{split}  
\end{equation}

 It is worth noting that the above maximizer does not have explicit solutions for $\boldsymbol{\theta}$. To address this, we develop an EM algorithm to simplify the computation.  
 
 In the E-step, given the current estimate $\boldsymbol{\theta}^t$, we calculate $E(\ell(\boldsymbol{\theta}; \boldsymbol{y}, \boldsymbol{z}))$, denoted as $Q(\boldsymbol{\theta} | \boldsymbol{\theta}^t, \boldsymbol{y})$. This simplifies to the computation of $E(z_i | \boldsymbol{\theta}^t, y_i)$ and $E(\log(z_i) | \boldsymbol{\theta}^t, y_i) $. Note that    
\begin{equation}
	E\left( z_i \ | \boldsymbol{\theta}^t , y_i \right)   = \ a^t / b_i ^t,   ~~ 	E\left( \log (z_i) \  | \boldsymbol{\theta}^t ,  y_i \right)  = \   \psi (a^t) - \log ( b_i ^t ),    \notag  
\end{equation}	
\noindent  where  $ \psi(\cdot) $ is the digamma function.  After some algebraic manipulations, we have    
\begin{equation}  
	\begin{split}
	Q\left( \boldsymbol{\theta}|\ \boldsymbol{\theta}^t ,\boldsymbol{y} \right) = \ 
	n& \log \left(  \frac{ \alpha ^ \alpha \beta }  { 2^{1+1/{\beta}} \Gamma (\alpha) \Gamma ( 1/\beta )  } \right) + (\alpha+1/\beta-1 ) 
	\sum_{i=1}^{n} \left\{  \psi (a^t) - \log (b_i^t )  \right\}  
	\notag  \\     
	-& \sum_{i=1}^{n}  \left\{   \frac{	a^t} { b_i^t}  
	\left[   \alpha  + \frac{ |y_i|^{\beta} } { 2 (1+r {\rm sign}(y_i) )^ \beta}   \right]  
	\right\} .   	\label{Q}   
	\end{split}     
\end{equation}

Next, in the M-step, we find the maximizer of $Q\left(\boldsymbol{\omega} | \boldsymbol{\omega}^t, \boldsymbol{y}\right)$ to update the estimates. To avoid a breakdown of the EM algorithm, let $\eta = -1/\beta$ and $\boldsymbol{\omega} = (r, \alpha, \eta)$, and substitute $Q\left(\boldsymbol{\omega} | \boldsymbol{\omega}^t, \boldsymbol{y}\right)$ for $Q\left(\boldsymbol{\theta} | \boldsymbol{\theta}^t, \boldsymbol{y}\right) $. By taking partial derivatives of $Q\left(\boldsymbol{\omega} | \boldsymbol{\omega}^t, \boldsymbol{y}\right)$ with respect to $r, \alpha, \eta$, and setting them to zero, we obtain       
\begin{align}
	\left\{  
	\begin{aligned}
		\ \ &  \sum_{i=1}^{n} \left\{  \frac{|y_i| ^ {1/ \eta }  {\rm sign}(y_i) }  { b_i^t [  1+r  {\rm sign}(y_i) ]^ { 1/ \eta  +1 } }    \right\}  =\ 0 ,  
		\\  \label{t+2} 
		\ \ &  \log (\alpha) +1 - \psi (\alpha) + \psi (a^t) - \frac{1}{n} \sum_{i=1}^{n}  \log (b_i^t) - \frac{a^t}{n}  \sum_{i=1}^{n}  \frac{1} { b_i^t } =\ 0  ,   
		\\  
		\ \ &  1/ \eta + \log 2  + \psi (\eta) - \psi ( a^t ) + \frac{1}{n}  \sum_{i=1}^{n}  \log (b_i^t)  
		- \frac{a^t} { 2n \eta ^2 } \sum_{i=1}^{n}  \left\{  \frac{ M_i ^ { 1/ \eta } }  { b_i^t   }  \log \left(  M_i   \right)  \right\}   =\ 0 ,     
	\end{aligned}
	\right. 
\end{align} 

\noindent where $ M_i =  |y_i| / [  1+r  {\rm sign}(y_i) ]. $

The solutions of the likelihood equations (\ref{t+2}) provide the MLEs of $r, \alpha$, and $\eta$, which can be obtained using numerical procedures such as the Newton-Raphson type method. The implementation of the EM algorithm is summarized as follows:  
\begin{algorithm}[h]   
	\caption{ The EM  algorithm for solving the MLE of the SkeGTD. }    \label{algem}  
	\textbf{Initialization:} Set $t=0$ and given a set of initial values $ |r^0| \leq 1,~ \alpha^0 >0, ~ \eta^0 >0 $.          
	\begin{algorithmic}  
			\REPEAT      
			\STATE   1. Set $t=t+1$.  
			\STATE   2. Substitute $ \boldsymbol{\theta}^0 $ into the second equation of Equation (\ref{t+2}),  and we obtain  $\alpha^1$.    
			\STATE   3. Substitute $ \boldsymbol{\theta}^0  $ into the first and third equations of Equation (\ref{t+2})  and  
			minimize the sum of the  absolute values of the expressions on the left, obtaining $(r^1, \eta^1)$.      
			\UNTIL{ $ \mathop{ \max }\limits_{i} |\boldsymbol{\omega}_i^{t+1} - \boldsymbol{\omega}_i^{t} | \le   10^{-4} $ }.    
		\end{algorithmic}      
\end{algorithm}      
\vspace{-5mm}       
Generally, when employing numerical procedures, we recommend trying multiple initial values to search for the global optimum by comparing their log-likelihood values. The algorithm is iterated until the convergence criterion $\mathop{\max}\limits_{i} |\boldsymbol{\omega}_i^{t+1} - \boldsymbol{\omega}_i^{t}| \le 10^{-4}$ is met.

\subsection*{Existence and Uniqueness}  

We now investigate the existence and uniqueness of the MLE. For the parameter transformation, $\boldsymbol{\theta}_i = h_i(\boldsymbol{\omega}), i=1, 2, 3$, the information matrices of MLEs for $\boldsymbol{\omega}$ and $\boldsymbol{\theta}$, denoted by $J(\boldsymbol{\omega})$ and $I(\boldsymbol{\theta}$, satisfy   
\begin{equation}
	I^{-1}(\boldsymbol{\theta}) = \Delta (\boldsymbol{\omega})  J^{-1}( \boldsymbol{\omega})   \Delta^{T}  (\boldsymbol{\omega}), \label{IJ}
\end{equation} 
\noindent where $\Delta(\boldsymbol{\omega})$ is a diagonal matrix with entries $\delta_{ij}(\boldsymbol{\omega}) = \partial h_i(\boldsymbol{\omega})/\partial(\boldsymbol{\omega}_j)$, for $i, j = 1, 2, 3$. Specifically, $\Delta(\boldsymbol{\omega}) = {\rm diag}(1, 1, \boldsymbol{\omega}_3^{-2})$. We derive the closed-form expression for the information matrix $J(\boldsymbol{\omega})$. The nonzero elements of $J(\boldsymbol{\omega})$ are     
\begin{equation} \label{Jii}   
	\begin{split}
	J_{11} =& \    \frac{1}{1-r^2}   
	\left\{   1+  \frac{\alpha} { \alpha \eta + \eta ^2 + \eta}  \right\} , 
	 \\   
	J_{22} =& \    
	-  \psi ' (\alpha + \eta ) + \psi ' (\alpha) 
	-  \frac{ \eta ( \alpha + \eta  + 2 )}  { \alpha (\alpha + \eta ) ( \alpha  + \eta + 1 ) } ,    \\
	J_{33} =& \   
	-  \psi ' (\alpha + \eta ) + \psi ' (\eta) - 1/ \eta^2 
	+ \frac{2 \alpha }{\eta ( \alpha + \eta )}  
	\left\{  \psi (\eta +1 ) - \psi (\alpha) +  \log (2 \alpha)  \right\}    \\
	+& \   \frac{\alpha}{ \eta ^2 ( \alpha+\eta +1 ) } 
	\left\{   \psi (\eta +1 ) - \psi (\alpha+1) +  \log (2 \alpha) +
	\psi' ( \alpha +1 ) +  \psi' ( \eta +1 )  \right\}  ,    \\
	J_{23} =& \   
	\psi ' (\alpha + \eta ) - \frac{1} {\alpha + \eta }  
	+ \frac{ \psi (\eta +1 ) +  \log (2 \alpha)  }  { (\alpha + \eta ) ( \alpha  + \eta + 1 ) }  
	-  \  \frac{ \psi(\alpha )} { \alpha  + \eta }  
	+ \frac{\psi(\alpha +1) }  { \alpha  + \eta +1 }  .   
\end{split} 
\end{equation}         
Further details are provided in Appendix \ref{app.fisher}. From formula (\ref{IJ}), we can obtain the elements of $I(\boldsymbol{\theta})$. Additionally, numerical evidence indicates that the information matrix $I(\boldsymbol{\theta})$ remains positive definite when $\alpha,~\beta \in [0.5, 25] $.   

\subsection{ L-moments Estimation }     

To obtain more reliable estimates for small samples, we propose a method using L-moments, which is a reliable tool for inferences and is less affected by the presence of outliers because it assigns less weight to extreme data values (\citet{basalamah2018beta}).   L-moments exhibit strong robustness compared to conventional moments (\citet{hosking2006characterization}). Since \citet{hosking1990moments} introduced the method of L-moments to several distributions, it has been extended to various fields (\citet{li2014estimation, modarres2010regional, ouarda2016review}). L-moments, representing expectations of linear combinations of order statistics, are defined as 
\begin{align} 
	\lambda_k = \frac{1}{k}  \sum_{m=0}^{k-1}  
	(-1) ^ m \binom{k-1}{m} \ E(X_{k-m:k}),  \ \ k=1,2, 3, 4, 
\end{align} 	
\noindent where $ X_{1:k} \le X_{2:k} \le \cdots \le X_{k:k} $ are  the order statistics from a random sample of size $k$ drawn from the distribution of $ X$.  
It is now convenient to define the L-moments ratios as 
\begin{equation*}      
		\tau_3 = \lambda_3 / \lambda_2, ~~~ 
		\tau_4 = \lambda_4 / \lambda_2,  
\end{equation*} 
and the coefficient of L-variance (L-CV) is 
\begin{equation*}
	\tau_1 = \lambda_2 / \lambda_1,       
\end{equation*}  
where $\tau_3$ represents L-skewness, $\tau_4$ denotes L-kurtosis, and $\tau_1$ is a measure of scale or dispersion (\citet{hosking1990moments}).         
          
Now, let's consider using the L-moments method to estimate the parameters of the three-parameter SkeGTD. The first four L-moments are given as follows.       
\begin{mythm} \label{L1234}  
	Let $ X_0 \sim SkeGTD(0, 1, r, \alpha, \beta) $ with  cdf  $ F_{X_0}(x; r, \alpha, \beta )$,  then 	  
	\begin{align}            
		\lambda_1 = \   2&r  (2 \alpha)^ {1/ \beta  }   	B (\alpha- 1/ \beta , 2/ \beta  ) /  B(\alpha, 1/ \beta  ) \label{l1},  
		\\  
		\lambda_2  =\   (&r^2+1)  (2\alpha)^ { 1/ \beta } B(\alpha- 1/ \beta , 2 / \beta ) /  B(\alpha,1/ \beta )  \notag \\  
		-&  (3 r^2+1)  (2\alpha)^ { 1/ \beta  }  
		\   \sum_{k=0}^{\infty}  C_k  B\left(2\alpha+k-1/ \beta ,  2/ \beta  \right)  / [B(\alpha, 1/ \beta )]^2 ,  \label{l2}
		\\       
		\lambda_3    = \ 	2&r  (2 \alpha ) ^ {1/ \beta } 	B (\alpha- 1/ \beta , 2/ \beta ) /  B(\alpha, 1/ \beta )   \notag  \\  
		-& \   3 (r^3+3r)  (2\alpha)^  {1/ \beta }    
		\sum_{k=0}^{\infty}  C_k  B\left(2\alpha+k- 1/ \beta ,  2/ \beta  \right)  / [B(\alpha, 1/ \beta )]^2     	\label{l3}    \\  
		+& \   6 ( r^3+r)  (2\alpha)^ {1/ \beta }   
		\sum_{k=0}^{\infty}  D_k  B\left( 3\alpha+k- 1/ \beta ,  2 / \beta\right) /  [B(\alpha,  1/ \beta  )]^3 ,     
	   \notag  \\   
		\lambda_4     
		= \   \lambda_2&  - 5 h_{1}(r, \alpha, \beta) 
		\sum_{k=0}^{\infty}   D_k  B \left( 2\alpha+k- 1/ \beta ,  1/ \beta  \right)  / \left\{   4 \beta [B(\alpha, 1/ \beta )]^2 
		\right\}    \notag \\   
		-& \ 5 h_{2}(r, \alpha, \beta)  
		\sum_{k=0}^{\infty}   E_k  B\left( 3\alpha+k- 1/ \beta  , 1/ \beta  \right) / \left\{    4 \beta [B(\alpha,  1/ \beta   )]^3 
		\right\}       \label{l4}   \\   
		+& \  5  h_{3}(r, \alpha, \beta)   
		\sum_{k=0}^{\infty}   F_k  B\left(4\alpha+k-1/ \beta ,  1/ \beta  \right)  /  \left\{    \beta [2 B(\alpha, 1/ \beta  )]^4
		\right\},     \notag    
	\end{align}
	where   
	\begin{eqnarray}
		h_{i}(r, \alpha, \beta)  = \left\{ (1+r)^{i +2} + (1-r)^{i +2} \right\}  (2\alpha)^  {1/ \beta } \ \ ,  i=1, 2, 3,  \notag 
	\end{eqnarray}	
	\begin{eqnarray} 
		& C_k & = \binom{1/\beta -1}{k}  \frac{ (-1)^k }{\alpha + k} , \quad  \quad \quad
		D_k = \sum_{ i_1 + i_2 =k} C_{i_1}  C_{i_2}  , \notag \\
		& E_k & = \sum_{ i_1 + i_2  + i_3 =k} C_{i_1}  C_{i_2} C_{i_3} , \quad \quad ~  
		F_k = \sum_{ i_1 + i_2  + i_3 + i_4  =k} C_{i_1}  C_{i_2} C_{i_3} C_{i_4}. \notag 
	\end{eqnarray}		
\end{mythm}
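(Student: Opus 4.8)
The plan is to avoid the order-statistic densities (which are not available in closed form) and instead route everything through the probability weighted moments $w_s := E\{X_0\,[F_{X_0}(X_0)]^s\}$, $s=0,1,2,3$. Hosking's linear relations then give $\lambda_1=w_0$, $\lambda_2=2w_1-w_0$, $\lambda_3=6w_2-6w_1+w_0$, and $\lambda_4=20w_3-30w_2+12w_1-w_0$, so the whole theorem reduces to evaluating the four integrals $w_s=\int_{-\infty}^{\infty} x\,[F_{X_0}(x)]^s f_{X_0}(x)\,dx$ using the explicit cdf (\ref{FX}) and density (\ref{fx0}). Throughout I would assume $\alpha\beta>1$, which is exactly what makes $E|X_0|<\infty$ and guarantees convergence of every beta function that appears (the binding constraint coming from $B(\alpha-1/\beta,2/\beta)$ in $\lambda_1$).

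First I would split each $w_s$ at $x=0$ and, on each half, change variables to $u=u(x;r,\alpha,\beta)$ from (\ref{FX}). A short computation shows that on $\{x>0\}$ the density collapses to a beta kernel, $f_{X_0}(x)\,dx=\tfrac{1+r}{2B(\alpha,1/\beta)}\,u^{\alpha-1}(1-u)^{1/\beta-1}\,du$, with $x=(2\alpha)^{1/\beta}(1+r)\big((1-u)/u\big)^{1/\beta}$, and symmetrically on $\{x<0\}$ with $1+r$ replaced by $1-r$ and $x$ carrying an overall minus sign. The variable $u$ ranges over $(0,1)$ and $F_{X_0}$ is affine in $I_u(\alpha,1/\beta)$ on each half, so every integrand becomes a power of $I_u$ times the beta kernel times the $x$-factor $u^{-1/\beta}(1-u)^{1/\beta}$.

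The infinite series are then forced by the expansion of the incomplete beta function. Expanding $(1-t)^{1/\beta-1}$ binomially and integrating gives $I_u(\alpha,1/\beta)=B(\alpha,1/\beta)^{-1}\sum_{k\ge0}C_k u^{\alpha+k}$ with $C_k$ exactly as defined, and raising this to the $j$-th power via the Cauchy product produces the convolution coefficients $D_k$ $(j=2)$, $E_k$ $(j=3)$, $F_k$ $(j=4)$ together with a factor $u^{j\alpha+k}$. Multiplying $u^{j\alpha+k}$ by the kernel $u^{\alpha-1}(1-u)^{1/\beta-1}$ and the $x$-factor $u^{-1/\beta}(1-u)^{1/\beta}$ and integrating over $(0,1)$ yields $B\big((j{+}1)\alpha+k-1/\beta,\,2/\beta\big)$; this is the source of the shifted first arguments $2\alpha{+}k,3\alpha{+}k$ and the second argument $2/\beta$ in (\ref{l1})--(\ref{l3}). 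Binomially expanding $[F_{X_0}]^s=[1-\tfrac{1+r}{2}I_u]^s$ on the positive half and $(\tfrac{1-r}{2}I_u)^s$ on the negative half, integrating term by term, and collecting the two halves with the PWM coefficients then produces $\lambda_1,\lambda_2,\lambda_3$ together with the polynomial prefactors $(r^2{+}1),(3r^2{+}1),(r^3{+}3r),(r^3{+}r)$ built from the accumulated $(1\pm r)$ powers.

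The step I expect to be the main obstacle is $\lambda_4$, whose stated form differs structurally from the others: it carries the fourth convolution $F_k$ paired with $4\alpha$, second arguments $1/\beta$ rather than $2/\beta$, extra $1/\beta$ and $1/4$ factors, and the symmetric functions $h_i(r,\alpha,\beta)=\{(1+r)^{i+2}+(1-r)^{i+2}\}(2\alpha)^{1/\beta}$. This signals that one integration by parts is needed before expanding: writing $f_{X_0}\,dx=dF_{X_0}$ and integrating by parts raises the power of $F_{X_0}$ by one (hence a fourth power of $I_u$ and the coefficient $F_k$), strips off the density kernel's $u^{\alpha-1}$ (changing the second beta argument from $2/\beta$ to $1/\beta$ and introducing the $1/\beta$ from $dx/du$), and supplies the $1/(s{+}1)$ factors. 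The delicate points here are choosing the antiderivatives so that the boundary terms at $\pm\infty$ vanish (subtracting a constant on the upper tail where $F_{X_0}\to1$, which is also what recombines the $(1\pm r)$ powers into the $h_i$), and justifying the term-by-term integration of the $C_k$-series (uniform convergence on $[0,1-\epsilon]$ plus a tail estimate, or dominated convergence). Once these are in place, the remaining work is the bookkeeping needed to collapse the multinomial and binomial coefficients into the compact closed forms.
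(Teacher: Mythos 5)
Your proposal is correct and follows essentially the same route as the paper: the paper starts from the representations $\lambda_2=\int F(1-F)\,dx$, $\lambda_3=\int F(1-F)(2F-1)\,dx$, $\lambda_4=\lambda_2-5\int F^2(1-F)^2\,dx$ and integrates by parts into exactly your probability-weighted-moment integrals $\int x\,[F_{X_0}(x)]^{j}f_{X_0}(x)\,dx$ for $\lambda_2$ and $\lambda_3$, while handling $\lambda_4$ through the $\int I_{u(x)}^{j}\,dx$ integrals that you recover by your reverse integration by parts. The subsequent steps --- splitting at $x=0$, the change of variables to the beta kernel, the series expansion of $I_u(\alpha,1/\beta)$ with the convolution coefficients $C_k, D_k, E_k, F_k$, and the resulting beta-function integrals --- coincide with the paper's Appendix A.2.
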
 
 The proof of Theorem \ref{L1234} is given in the Appendix \ref{app.l1234}.    

Let $x_{1:n} \le x_{2:n} \le \cdots \le x_{n:n}$ be the order statistics of a random sample $x_1, x_2, \cdots, x_n$ of size $n$. Then the first four sample L-moments are given as follows    
\begin{equation}  
	\begin{split}
		\ell_1 =& \frac{1}{n}  \sum_{i=1}^{n}  x_{i:n}, ~~~~~~~~~~~~~~~~~~~~~~~~~~~~ \ell_2 =   \frac{1}{n(n-1)}  \sum_{i=1}^{n}\  b_{i,n}   x_{i:n},   \\   
		\ell_3 =&  \frac{1}{n(n-1)(n-2)}  \sum_{i=1}^{n}\  c_{i,n}   x_{i:n} , ~~~~ 
		\ell_4 =   \frac{1}{n(n-1)(n-2)(n-3)}  \sum_{i=1}^{n}\  d_{i,n}   x_{i:n},	  
	\end{split}  
\end{equation} 
\noindent where $ b_{i,n}=2i-n-1,  c_{i,n}=(i-1)(i-2)-4(i-1)(n-i)+(n-i)(n-i-1),$ and $ d_{i,n}= (i-1)(i-2)(i-3)-(n-i)(n-i-1)(n-i-2)+9(n+1-2i)(i-1)(n-i). $    
The sample L-CV, sample L-skewness and sample L-kurtosis is defined as   
\begin{equation*}       
	  \hat{ \tau_1 } = \hat{ \lambda_2 } / \hat{ \lambda_1}, ~~\hat{ \tau_3 } = \hat{ \lambda_3 } / \hat{ \lambda_2}   ~~  \textrm{and} ~~ \hat{ \tau_4 } = \hat{ \lambda_4 } / \hat{ \lambda_2} ,
\end{equation*}         
respectively.     
Let $\tau_1$, $\tau_3$, and $\tau_4$ be equal to their corresponding estimators $\hat{\tau_1}$, $\hat{\tau_3}$,  and $\hat{\tau_4} $, respectively. Then we can obtain an equation set whose solution provides the estimation of $\theta$. This can be achieved through a numerical method such as the L-BFGS-B algorithm.

\subsection{ Parameter Estimation of $ SkeGTD(\mu, \sigma, r, \alpha, \beta) (\mu \neq  0, \sigma \neq 1) $   }   \label{twostep}  

In this subsection, we propose a two-step estimation approach to estimate the parameters of $ SkeGTD(\mu, \sigma, r, \alpha, \beta) $.

In the first stage, we generate a set of $\mu$ values within a neighborhood of $\hat{\mu}_0$. For each given $\mu$, the other four parameters are estimated separately. In the second stage, a numerical interpolation method using the corresponding likelihood values is employed to determine a curve, defined as the profile likelihood for $\mu$. The maximum of the profile likelihood serves as the estimator for $\mu$. Then $\sigma, r, \alpha, $ and $ \beta $ are re-estimated and taken as the final estimates.

Let $ X_{1},  \cdots ,X_{n}  $ be a random sample  from $  SkeGTD(\mu, \sigma, r, \alpha, \beta) $.      
TSE  procedures are outlined below.       

\textbf{1.}   Initialize $\hat{\mu}_0$  and  generate a set of  $   \mu_{(t)}, t=1, \dots, s $ at an    appropriate distance  in a neighborhood of $\hat{\mu}_0$. 

\textbf{2.}  For each fixed $ \mu_{(t)} $, estimate $r$ as follows:  
\begin{equation}   
	 r_{(t)} = 1 - \frac{2}{n}  \sum_{i=1}^{n} I (X_i \leq \mu_{(t)})   ,   \notag   
\end{equation}  
where $I(\cdot)$ denotes the indicator function for $X_i$.

\textbf{3.}  Estimate  $  \alpha, \beta $ and $ \sigma $ using the method of moments.      
    
Replacing $r$ in Equations (\ref{g1}) and (\ref{g2}) with $r_{(t)}$, we obtain the estimators of $\alpha$ and $\beta$ as:   
\begin{equation}       
	(\alpha_{(t)}, \beta_{(t)})  =  \mathop{\arg \min }\limits_{ \alpha,~ \beta } \  \left\{  	|\gamma_1(X;  r_{(t)},\alpha,\beta)-g_1| +  |\gamma_2(X;  r_{(t)},\alpha,\beta)-g_2| \right\},     
\end{equation}    
\noindent where $ g_1 $ and $ g_2 $ the sample skewness and kurtosis, respectively. Replacing $\alpha$ and $\beta$ in Equation (\ref{sgtvar}) with $\alpha_{(t)}$ and $\beta_{(t)}$, we have  
\begin{equation}       
	\sigma_{(t)}  =  \mathop{\arg \min }\limits_{ \sigma } \    	|Var(X;  \sigma, r_{(t)}, \alpha_{(t)}, \beta_{(t)})-s^2|,       
\end{equation}
\noindent where $ s^2 $ denotes the sample variance. 
     
\textbf{4.} Compute the likelihood function for $t = 1, 2, \dots, s$. Then, interpolate on $\mu$ values to obtain the profile likelihood, with its maximizer chosen as the estimator of $\mu $. Repeat steps 2 and 3 to obtain the final estimates of $r, \alpha, \beta, and \sigma$.                 

It is essential to choose a suitable starting value, $\hat{\mu}_0$, in step 1. Since $\mu$ represents the mode of the SkeGTD, we recommend using the robust estimator, $\hat{\mu}_0$, obtained through the half-range mode (HRM) method (\citet{bickel2002robust}).   The HRM estimator can be computed using the function $  half.range.mode $  in the $ genefilter $ package or the function $ hrm $  in the  $ modeest $ package in R software. The algorithm proceeds as follows.

\begin{compactitem}   
	\item[1.]  Let $ S_0= { y_i ^{(0)}, i=1, \cdots, n } $ denote the observed sample data, where $ y_{(1)}^{(0)} \leq \cdots \leq y_{(n)}^{(0)} $ are the order statistics, and $ w_1 = (y_{(n)}^{(0)} - y_{(1)}^{(0)}) /2$. 
	\item[2.]   Determine the target interval $ [M_1(w_1) - w_1 / 2, M_1(w_1) + w_1 / 2] $ that includes the maximum possible number of points and the midpoint $ M_1(w_1) $.  
	\item[3.]   The points within this interval constitute the subset we choose, denoted as $ S_1 = { y_i ^{(1)}, i=1, \cdots, n_1 }$.     
	\item[4.]  Repeat steps 1, 2, and 3, selecting a subset from $S_1$, and continue this process. The iteration stops when only two points remain. The estimator of the mode is the average of these two sample points.     
\end{compactitem}   

\section{ Simulation Study }   \label{sec4}

\subsection{Experiment I:  performance of the estimation methods }    
                 
To investigate the finite-sample performance of different estimation methods, we conducted a simulation study using sample sizes of $n = 20, 50, 100, 200,$ and $500$. Assuming $\mu = 0$ and $\sigma = 1$, we considered two sets of true values for $\boldsymbol{\theta}$: $(-0.5, 4, 2.5)^\top$ and $(0.7, 3, 2.5)^\top$. For each set of true values of $\boldsymbol{\theta} $ and every sample size $n$, we generated $N = 5000$ random samples from the SkeGTD based on formula (\ref{structure}). Estimators $ \theta_{(i)}$ (where $i = 1, 2, \dots, N$) were obtained using different methods. We calculated the average, relative bias (Rbias), and mean squared error (MSE) of these estimators for each parameter. The Rbias and MSE are defined as follows       
\[  
Rbias (\hat{\theta}) = \frac{1}{N} \sum_{i=1}^{N} 
\left| \frac{\hat{\theta}_{(i)} - \theta}  {\theta} \right|, ~~  \textrm{and} ~~	MSE(\hat{\theta}) =   \frac{1}{N} \sum_{i=1}^{N} 
(\hat{\theta}_{(i)} - \theta)^2  .        
\]

Tables \ref{3can.2}-\ref{3can.5} present the simulation results for the MLE and LME of $r, \alpha,$ and $\beta$. To describe the Rbias and MSE of the MLEs and LMEs, we graphically present the simulation results with $\boldsymbol{\theta} = (0.7, 3, 2.5)$ in Figure \ref{bar}.    As shown in Figure \ref{bar}, for $r, \alpha$, and $\beta$, the Rbias and MSE for each method decrease as $n$ increases. The Rbias of $\hat{r}$ is the smallest among the three parameters. It is noteworthy that the shape of the SkeGTD is more sensitive to changes in $\alpha$ when $\alpha$ is small. This implies that it might be beneficial to impose some restrictions on the value of $\alpha$ to avoid the difficulty of estimating large $\alpha$.  Tables \ref{3can.2}-\ref{3can.5} show that the MLE $\hat{\beta}$ tends to overestimate $\beta$, while the LME $\hat{\beta}$ tends to underestimate $\beta$. Moreover, it is evident from Tables \ref{3can.2}-\ref{3can.5} that for $\alpha$, LME outperforms the MLEs in terms of MSE for small sample sizes. However, for $r$ and $\beta$, the MLEs are more accurate than the LMEs across all sample sizes. In summary, the simulation studies indicate that LME is more accurate than MLE for small and moderate sample sizes.
 
Next, we conducted a simulation study of the TSE for the $SkeGTD(\mu, \sigma, r, \alpha, \beta)$. The true values of the parameter vector $(\mu, \sigma, \theta)^\top$ were set to $(4, 2, 0.7, 4, 2)^\top$ and $(-4, 2, 0.7, 4, 2)^\top$. The corresponding simulation results are presented in Tables \ref{5can.3}-\ref{5can.4}.  

Tables \ref{5can.3}-\ref{5can.4} reveal that $\hat{\sigma}, \hat{\alpha}$, and $\hat{\beta}$ become more accurate as $n$ increases. The MSEs of $\hat{\mu}$ and $\hat{r}$ are smaller than those of the other parameters. Furthermore, we observed that the estimators of $\sigma, \alpha$, and $\beta$ obtained by the two-step estimation method exhibit good performance.                             

\subsection{Experiment II: model selection}        

In this simulation study, we aim to demonstrate the flexibility of our SkeGTD model. As suggested by an anonymous reviewer, we generated artificial data from a distribution distinct from SkeGTD but exhibiting skewness and heavy-tailed properties. A suitable example is the skew-cauchy (SC) distribution, with a density function of     
\begin{equation}
	 f_{SC} (x; \xi, \omega, \alpha)=\frac{1}{\pi \omega\left(1+\left(\frac{x-\xi}{\omega}\right)^2\right)}   
	 	\left(  1+ \frac{\alpha \left(\frac{x-\xi}{\omega}\right) }{ \sqrt{ 1 + \left(\frac{x-\xi}{\omega}\right)^2 (1+\alpha^2)} }  \right) , ~~ x \in \mathbb{R},          
\end{equation}     
where $ \xi, \alpha \in \mathbb{R}$, and $ \omega >0 $.        

In this study, we generated 1000 samples from the SC distribution, with sample sizes of $n=20, 50, 100, 200,$ and 500, respectively. The parameters are set as $ \xi = -1.8, \omega = 0.8, \text{~and~} \alpha = 18 $, which correspond to a highly right-skewed and heavy-tailed distribution.   For each sample, we employed the normal (N), $t$, SN (\cite{azzalini1985class}), skew-t (ST; \cite{fernandez1998bayesian}) distributions, and SkeGTD for fitting. The pdf of ST is given by               
\begin{eqnarray}     
	\begin{aligned} 
	 f_{ST}(x; \nu, \gamma) =
		\begin{dcases}  
			\frac{2 }{ \gamma + \frac{1}{\gamma} } f_{t}(\gamma x; \nu) , & x \le  0, \\ 
			\frac{2 }{ \gamma + \frac{1}{\gamma} } f_{t}\left( \frac{x}{\gamma}; \nu \right) , & x > 0, 
		\end{dcases}	 	   
	\end{aligned} 
\end{eqnarray}            
where $ \nu >0, \gamma \in \mathbb{R} $, $ f_{t}(\cdot; \nu) $ denotes the pdf of the $t$ distribution with $ \nu $ degrees of freedom. We compared the SkeGTD with four alternative models using the Akaike Information Criterion (AIC; \cite{akaike1998information}), the Bayesian Information Criterion (BIC; \cite{schwarz1978estimating}), and the Efficient Determination Criterion (EDC; \cite{bai1989rates}), such that $ AIC = -2 \ell (\widehat{\theta}) + 2 \rho,  BIC= -2 \ell (\widehat{\theta}) + \log(n) \rho, EDC = -2 \ell (\widehat{\theta}) + 0.2 \sqrt n \rho $, where $ \ell (\widehat{\theta}) $ is the actual log-likelihood, $\rho $ is the number of free model parameters to be estimated (\cite{lye1993robust}). Table \ref{tablesimu2} reports the percentage in which the SkeGTD was indicated as the best model according to the three criteria.       
        
The results shown in Table \ref{tablesimu2} suggest that the SkeGTD's percentage of preference increases with increasing sample size $n$. For large sample sizes $n= 200$ and 500, SkeGTD shows a significantly higher percentage of preference when compared to the N and SN models than to the $t$ and ST models. This observation indicates the superior performance of SkeGTD in fitting skewed and heavy-tailed data. Even compared to the ST distribution, SkeGTD maintains a superior win rate of 60\% to 70\% for all $n$ considered. In summary, all model selection criteria demonstrate the superior fit of the SkeGTD for the simulated data, thus signifying the indispensability of the proposed model.

\section{Real Data Analysis }   \label{sec5}  
         
In this section, we provide two applications for the proposed SkeGTD. The first application involves fitting a real dataset and comparing the results with some alternative models. The second application demonstrates how to incorporate the SkeGTD into a regression model.

The first example considers the roller data set which is available in   \url{   http://lib.stat.cmu.edu/jasadata/laslett}.   This dataset consists of 1,150 height measurements at 1µm intervals along the drum of a roller and was collected as part of a study on surface roughness.  Basic descriptive statistics for the dataset are summarized in Table \ref{summary roller}.   

Notably, the sample skewness and kurtosis coefficients indicate that the data exhibits moderate to strong asymmetry and heavy tails. This suggests that fitting a skewed distribution to the data is a reasonable choice. Using the TSE method discussed in Section \ref{twostep}, we obtained the following estimators, with bootstrap standard errors in parentheses: $\hat{\mu} = 3.819(0.0227), \hat{\sigma} = 0.445(0.0354), \hat{r} = -0.300(0.0210), \hat{\alpha} = 9.000(2.1092), $ and $ \hat{\beta} =  1.572(0.1460) $.  The standard errors indicate that $\mu, \sigma, r,$ and $\beta$ are reliably estimated, while the estimator of $\alpha$ exhibits some variability. The values of $\hat{r}, \hat{\alpha},$ and $\hat{\beta}$ suggest that the data exhibits asymmetry and heavy tails. The bootstrap 95\% confidence intervals (BCI) for these estimators are provided in Table \ref{aic}. Notably, the interval corresponding to the skewness parameter $r$ does not contain zero, suggesting evidence to reject the hypothesis that this dataset follows a normal distribution.

This dataset has previously been modeled using the skew-normal distribution, skew-flexible-normal (SFN) distribution, and the alpha-skew generalized $t$ (ASGT) distribution proposed by \citet{azzalini1985class}, \citet{gomez2011bimodal}, and \citet{acitas2015alpha}, respectively. Figure \ref{roller} shows the fitting results superimposed on a single set of coordinate axes. From the histogram, it is evident that the data exhibit a clear right-tailed distribution, and the SkeGTD has superiority for modeling the peakedness of the roller data. Thus, the SkeGTD provides the best fitting among these alternatives.   
Meanwhile, we compared the SkeGTD with other skewed distributions using the log-likelihood (logL) and the AIC criterion. Detailed results are presented in Table \ref{aic}. It's worth noting that lower AIC values and higher logL values indicate better model fitting. Both criteria suggest that the proposed SkeGTD performs satisfactorily.

We now present the second example in which the SkeGTD is applied in estimating a regression model. The Marlin Marietta dataset used in this example is sourced from \citet{butler1990robust}, where a simple linear regression model is introduced:         
\begin{equation}
	y_i =  \beta_{0} + \beta_{1} x_i + \epsilon_i,~~ i= 1,\dots, 60, \label{reg1} 
\end{equation}   
where $y_i$ represents the excess rate of return for the Martin Marietta Company observed from January 1982 to December 1986, $x_i$ denotes the Center for Research in Security Prices (CRSP), an index of the excess rate of return for the New York market, and $\epsilon_i$ is an error term assumed to be independently and identically distributed (i.i.d) with a generalized $t$ distribution. From Figure \ref{MMdata} (a), we can observe that the data include one obvious extreme observation in the upper right corner and two potential outliers determined by residuals and other diagnostic measures. Figure \ref{MMdata} (b) indicates that the data exhibit high skewness and heavy tails. Previous studies by \citet{azzalini2003distributions} and \citet{diciccio2004inferential} involved fitting different skewed distributions to this dataset. In their papers, the error term is assumed to follow a skewed $t$ distribution and skewed exponential power (SEP) distribution.  
 
In this study, we assume that $\epsilon_i$ follows a SkeGTD with $\mu = 0$, which is equivalent to $y_i|x_i \sim SkeGTD(\beta_{0} + \beta_{1} x_i, \sigma, r, \alpha, \beta)$. The MLEs and their corresponding standard errors, calculated via the empirical information matrix, are as follows: $\hat{\beta}{0} = 0.004$, $\hat{\beta}{1} = 1.112$, $\hat{\sigma} = 0.061$, $\hat{r} = 0.250 (0.087)$, $\hat{\alpha} = 0.761 (0.201)$, and $\hat{\beta} = 3.174 (0.377)$.  It's important to note that a small value of the tail parameter $\alpha$ implies heavy tails in the data. The above estimators indicate that the null hypotheses $ H_0: r = 0$ and $H_0: \alpha > 1$ are rejected at the 5\% level. This suggests that the data exhibit asymmetry and heavy tails. 

Figure \ref{MMdata} (a) shows a scatter plot with three regression lines superimposed. The dashed line corresponds to the ordinary least squares (OLS) fit, while the dotted line and the solid line correspond to the SEP distribution and SkeGTD, respectively. It's evident that the fitted models obtained by the SEP and SkeGTD are hardly influenced by outliers. To account for the asymmetry in the distribution of $\epsilon$, the intercept can be adjusted to $\hat{\beta_{0}} + \hat{\beta_{1}} x + \hat{ {E}}(\epsilon)$. Figure \ref{MMdata} (b) shows the histogram of the residuals after removing the line $  0.004+1.112 CRSP +  \hat{ {E}}(\epsilon) $,  which further supports the satisfactory agreement between the original data and the fitted SkeGTD.

\section{ Conclusion and Discussion}  \label{sec6}  

We have proposed a novel class of skewed generalized $ t $ distribution as a scale mixture of the SGN distribution.  The SkeGTD has broad applicability in statistical modeling due to its capacity to accommodate a wide range of skewness and kurtosis, and its flexibility in handling separate location, scale, skewness, and shape parameters. Moreover, it contains several common distributions as special cases, including the Normal, Laplace, Uniform, Student-t, Generalized Normal distributions, and more. 
    
In this paper, we have presented the stochastic representation of the SkeGTD and several key results. For the three-parameter SkeGTD, we have developed the likelihood inference method based on the stochastic representation and  L-moments method. Additionally, we have demonstrated that the information matrix remains positive definite when $ \alpha,\beta \in [0.5, 25] $. Furthermore, we have introduced a TSE method for the five-parameter SkeGTD. The simulation results have shown that the LMEs of $\alpha$ and $\beta$ are more accurate than the MLEs for small or medium sample sizes, while the MLEs of $r$ and $\beta$ are more accurate for large sample sizes. The proposed SkeGTD has proven to be the best-performing distribution among the existing ones in the analysis of the simulated data and two real datasets.             

Finally, we have only considered situations where the data is unimodal. There is no doubt that the concept of the SkeGTD can also be extended to multimodal data, and we are currently exploring mixture modeling based on independent skewed generalized $t$ distributions. The proposed SkeGTD can be further extended for random effects modeling in linear mixed models and latent variable mixture modeling for measurement error models. We are also planning to apply it to various domains such as image processing, survival analysis, and more.

\section*{Acknowledgments}             

The authors wish to express their deepest gratitude to the chief editor, the associate editor, and two referees whose careful reading and comments led to an improved version of the paper.  This study was supported by the National Natural Science Foundation of China(No.11701021),  National Statistical Science Research Project (No.2022LZ22), and Science and Technology Program of Beijing Education Commission (No.KM202110005013).

\section*{Conflict of interest}       

No potential conflict of interest was reported by the authors.  

\vskip 3mm 
 
\bibliographystyle{apalike}       
\bibliography{Myref}

\clearpage         

 \appendix   

 \section*{ Appendix }   
\setcounter{equation}{0}
\setcounter{subsection}{0}
\renewcommand{\theequation}{A.\arabic{equation}}
\renewcommand{\thesubsection}{A.\arabic{subsection}}

\subsection{  Calculation of the elements of the information matrix \label{app.fisher}  }

It is well known that the  entries  of $   J(\boldsymbol{\omega}) $ are given by   
\begin{equation*}    
   J_{ij}  =     -E_{\boldsymbol{\theta}}  \left( \frac{ \partial ^2 \log f(x; \boldsymbol{\omega}) } 
  { \partial \boldsymbol{\omega}_i \partial  \boldsymbol{\omega}_j } \right) ,\ \   i, j =1,2,3,       
\end{equation*}              
where 
$ (\boldsymbol{\omega}_1, \boldsymbol{\omega}_2, \boldsymbol{\omega}_3)^T = (r, \alpha, -1/ \beta)^T $.

The components of the score vector   $ \boldsymbol{S_\omega}   $  are 
\begin{align}    
	S_r =& \  
	 \frac{( \alpha \beta + 1) {\rm sign}(x) } { 1 + r {\rm sign}(x)} 
	\left( \frac{M_{x, \boldsymbol{\omega}}-1} {M_{x, \boldsymbol{\omega}}} \right),  \notag \\ 
	S_\alpha  =& \ 
	 \psi ( \alpha + \eta ) - \psi ( \alpha )
	- \eta / \alpha -\log M_{x, \boldsymbol{\omega}} + \frac{ \alpha + \eta } {\alpha} 
	\left( \frac{M_{x, \boldsymbol{\omega}}-1} {M_{x, \boldsymbol{\omega}}} \right),  \notag \\
	S_\eta  =& \ 
	 \psi ( \alpha + \eta ) - \psi (\eta) 
	- 1/ \eta - \log (2 \alpha ) - \log M_{x, \boldsymbol{\omega}}  
	+  \log ( A_{x, r} ) \frac{ \alpha + \eta } {\eta ^2}
	\left( \frac{M_{x, \boldsymbol{\omega}}-1} {M_{x, \boldsymbol{\omega}}} \right)   , \notag      
\end{align} 
where  
\begin{align*}
	 M_{x, \boldsymbol{\omega}}  =&  1+  \frac{ |x|^{\beta}  }{    2\alpha[ 1+r {\rm sign}(x) ]^ { \beta}   },   \notag \\ 
	 A_{x, r} =& \frac{|x|} { 1+r {\rm sign}(x)   } 
\end{align*}  
and $ \beta = -1 / \eta $.      
With some straightforward algebra manipulations,  we can obtain the elements of this matrix       
 \begin{equation}  
	\begin{aligned}  
		J_{11} =& \   \label{Jyuanshi}
		\frac{\alpha + \eta} {\eta} \kappa_3
		+ \frac{ \alpha + \eta  } {\eta^2}  \kappa_4 ,  \\
		J_{22} =& \ 
		\psi ' (\alpha)  - \psi ' (\alpha + \eta )  - \frac{\eta} {\alpha ^2}  
		- \frac{\alpha - \eta}{\alpha^2 } \kappa_1
		+ \frac{\alpha + \eta}{\alpha ^2 } \kappa_2 ,  \\
		J_{33} =& \  
		\psi ' (\eta)  - \psi ' (\alpha + \eta )  - \frac{1} {\eta^2 } 
		+ \frac{2 \alpha }{\eta ^3}  \kappa_5 
		\  +  \frac{\alpha + \eta} {\eta^4}  \kappa_7 , \\
		J_{12} =& \   
		- \frac{1} {\eta}  \kappa_8 
		+ \frac{\alpha + \eta } { \alpha \eta } \kappa_9
		  , \\
		J_{13} =& \    
		- \frac{ \alpha } { \eta^2 } \kappa_8 
		-  \frac{ \alpha + \eta } { \eta^3 } \kappa_{10} ,   \\
		J_{23} =& \   
		\psi ' (\alpha + \eta ) - \frac{1} {\alpha}  
		+ \frac{1} {\alpha} \kappa_1  
		+ \frac{1} {\eta ^2 }    \kappa_5
		\  -   \frac{\alpha + \eta} {\alpha \eta^2}  \kappa_6  ,    
	\end{aligned} 
\end{equation} 
\noindent where   
\begin{align}
	\kappa_1 = & \ \notag
	E\left[ \frac{M_{x, \boldsymbol{\omega}}-1} {M_{x, \boldsymbol{\omega}}} \right]  
	= \  \frac{\eta}{\alpha + / \eta} , \\
	\kappa_2 = & \  \notag
	E\left[ \frac{M_{x, \boldsymbol{\omega}}-1} {M_{x, \boldsymbol{\omega}}^2} \right]  
	= \ 
	\frac{ B(\alpha+1, \eta +1) }{B(\alpha, \eta) }  , \\
	\kappa_3 = & \  \notag
	E\left[ \frac{M_{x, \boldsymbol{\omega}}-1} {M_{x, \boldsymbol{\omega}}} \frac{1} { [ 1+r {\rm sign}(x) ] ^2} \right]  
	= \  \frac{1}{1-r^2}  \frac{\eta}{\alpha + / \eta}   , \\
	\kappa_4 = & \ \notag
	E\left[ \frac{M_{x, \boldsymbol{\omega}}-1} {M_{x, \boldsymbol{\omega}}^2} \frac{1} { [ 1+r {\rm sign}(x) ] ^2} \right]  
	= \  \frac{1}{1-r^2}  \frac{ B(\alpha+1, \eta +1) }{B(\alpha, \eta) },  \\
	\kappa_5 = & \  \notag
	E\left[ \log ( A_r ) \frac{M_{x, \boldsymbol{\omega}}-1}{M_{x, \boldsymbol{\omega}}}   \right]  
	= \  \frac{\eta^2}{\alpha + / \eta} 
	\left[  \psi (\eta +1 ) - \psi (\alpha) +  \log (2 \alpha)  \right] , \\
	\kappa_6 = & \  \notag
	E\left[ \log ( A_r ) \frac{M_{x, \boldsymbol{\omega}}-1}{M_{x, \boldsymbol{\omega}}^2}   \right]  
	= \   \frac{ \eta  B(\alpha+1, \eta +1) }{B(\alpha, \eta) } 
	\left[  \psi ( \eta +1 ) - \psi ( \alpha +1 ) +  \log (2 \alpha)  \right] , \\    
	\kappa_7 = & \   E\left[ \left[ \log ( A_r ) \right] ^2 \frac{M_{x, \boldsymbol{\omega}}-1}{M_{x, \boldsymbol{\omega}}^2}   \right]   
	=  \kappa_6 +  \frac{ \eta  B(\alpha+1, \eta +1) }{B(\alpha, \eta) } 
	\left[  \psi' ( \alpha +1 ) +  \psi' ( \eta +1 )  \right]  , \notag \\ 
	\kappa_8 = & \  E\left[ \frac{{\rm sign}(x)} {1+ r {\rm sign}(x)}  \frac{M_{x, \boldsymbol{\omega}}-1}{M_{x, \boldsymbol{\omega}}}   \right] = 0   , \ \ \ \ \ \ \ \ 
	\kappa_9 =   E\left[ \frac{{\rm sign}(x)} {1+ r {\rm sign}(x)}  \frac{M_{x, \boldsymbol{\omega}}-1}{M_{x, \boldsymbol{\omega}}^2}   \right] = 0  \notag , \\ 
	\kappa_{10} = & \  E\left[ \log ( A_r ) \frac{{\rm sign}(x)} {1+ r {\rm sign}(x)}  \frac{M_{x, \boldsymbol{\omega}}-1}{M_{x, \boldsymbol{\omega}}^2}   \right]  = 0 , \notag 
\end{align}
\noindent   $  \psi ( \cdot ) $ and $ B(\cdot)$ are defined as the above,  and   $ \psi' ( \cdot ) $ is the derivative function of $ \psi ( \cdot ) $.

Direct substitution of the above quantities  into equations (\ref{Jyuanshi})      
yields the  nonzero elements of  $ J(\boldsymbol{\omega})$ in (\ref{Jii}).

\subsection{Proof of Theorem \ref{L1234}}\label{app.l1234}    

	We only proof Equations (\ref{l2})-(\ref{l4}).  
	The 2-4th   L-moments  are  derived by using  the following  formulas     
	\begin{align}  
		\lambda_2 =& \  \int_{-\infty}^{\infty} F_{X_0}(x)\  
		( 1-F_{X_0}(x) )\  \mathrm{d} x , \label{l2hua} \\
		\lambda_3 =& \   \int_{-\infty}^{\infty} F_{X_0}(x)\  
		( 1-F_{X_0}(x)  )\   ( 2 F_{X_0}(x) -1 ) \  \mathrm{d} x ,   \label{l3hua}  \\  
		\lambda_4 =& \  \lambda_2 - 5   \Delta  
		=  \lambda_2 - 5   \int_{-\infty}^{\infty} F_{X_0} ^2(x) \ 
		( 1-F_{X_0} (x) )^2  \  \mathrm{d} x  \label{l4hua}  . 
	\end{align}

	On dividing the integration interval into $x>0$ and $x\le 0$  
	and  integrating by parts in Equations (\ref{l2hua})-(\ref{l4hua}),   
	it is easy to obtain following    expressions    		
	\begin{align}
		\lambda_2  =& \  2 \int_{-\infty}^{\infty} x f_{X_0} (x)  F_{X_0} (x) \mathrm{d} x - \int_{-\infty}^{\infty} x f_{X_0} (x)  \mathrm{d} x ,
		\label{hjl2}  \\  
		\lambda_3  =& \  6 \int_{-\infty}^{\infty} x f_{X_0} (x) 
		( F_{X_0} ^2 (x) - F_{X_0} (x) ) \mathrm{d} x  +  \int_{-\infty}^{\infty} x f_{X_0} (x)  \mathrm{d} x   \label{hjl3} ,
	\end{align}       
	\begin{equation} 
		\begin{aligned}
			\Delta   =& \  \frac{(1+r)^2}{4} \int_{0}^{\infty}      \left[   I_{u(x)} ^2 - (1+r)  I_{u(x)} ^3 + \frac{(1+r)^2}{4}   I_{u(x)} ^4 \right]  \ \mathrm{d} x    \\ 
			+& \   \frac{(1-r)^2}{4}  \int_{-\infty}^{0}  
			\left[   I_{u(x)} ^2 - (1-r)  I_{u(x)} ^3 + \frac{(1-r)^2}{4}   I_{u(x)} ^4  \right] \ \mathrm{d} x   \label{huajiandelta}  ,   
		\end{aligned} 
	\end{equation}
	
	\noindent	where $ I_{u(x)} =\  I_{u(x)} (\alpha, 1/ \beta) $ and $u(x) = \ \left[ 1+ |x|^{\beta} / \left( 2 \alpha (1+ r {\rm sign}) ^ \beta \right)\right] ^{-1}.  $

	To simplify the above equations,  we need to introduce the following  series expansions            
	\begin{align}            
		I_u (a,b)  \ =& \  \frac{u^a}{B(a,b)}\  \sum_{k=0}^{\infty}  C_k u^k  ,	 ~~~~~  
		I_u ^2 (a,b) \ =  \  \frac{ u^{2a} }{B^2(a,b)}\ \sum_{k=0}^{\infty}  D_k u^k  , \notag \\ 
		I_u ^3 (a,b) \ =& \  \frac{ u^{3a} }{B^3(a,b)}\ \sum_{k=0}^{\infty}  E_k u^k   ,  ~~~~  
		I_u ^4 (a,b) \ =  \  \frac{ u^{4a} }{B^4(a,b)}\ \sum_{k=0}^{\infty}  F_k u^k  ,  \notag   
	\end{align}       
	where $  C_k,  D_k,  E_k $  and  $ F_k $  are  defined in Theorem \ref{L1234}.

	After substituting these  series expansions in $F_{X_0}(x) $ and each power of $F_{X_0}(x) $ in  Equations  
	(\ref{hjl2})-(\ref{huajiandelta}),   
	we need to calculate the following integral formulas                      
		\begin{align}   
		\int_{0}^{\infty}  x f_{X_0} (x) I_{u(x)}  \ \mathrm{d} x  
		\ =&   \    \frac{ (2\alpha) ^ {1 / \beta} (1+r)^2 } {2 [ B(\alpha, 1 / \beta)]^2 }
		\sum_{k=0}^{\infty}  C_k  B\left(2\alpha+k-\frac{1}{\beta}, \frac{2}{\beta} \right)  ,
 \\ 
		\int_{0}^{\infty}  x f_{X_0} (x) I_{u(x)} ^2   \ \mathrm{d} x  
		\  =&   \    \frac{ (2\alpha) ^ {1 / \beta} (1+r)^2 } {2 [ B(\alpha, 1 / \beta)]^3 }
		\sum_{k=0}^{\infty}  D_k  B\left(3\alpha+k-\frac{1}{\beta}, \frac{2}{\beta} \right),
 \\  
		\int_{0}^{\infty}   I_{u(x)} ^2   \ \mathrm{d} x  
		\ =&  \    \frac{ (2\alpha) ^ {1 / \beta} (1+r) } {\beta
			[ B(\alpha, 1 / \beta)]^2 }
		\sum_{k=0}^{\infty}  D_k  B\left(2\alpha+k-\frac{1}{\beta}, \frac{1}{\beta} \right) ,
	 \\   
		\int_{0}^{\infty}   I_{u(x)} ^3   \ \mathrm{d} x  
		\  =&  \    \frac{ (2\alpha) ^ {1 / \beta} (1+r) } {\beta
			[ B(\alpha, 1 / \beta)]^3 }
		\sum_{k=0}^{\infty}  E_k  B\left(3\alpha+k-\frac{1}{\beta}, \frac{1}{\beta} \right) ,
 \\   
		\int_{0}^{\infty}   I_{u(x)} ^4   \ \mathrm{d} x  
		\ =&  \    \frac{ (2\alpha) ^ {1 / \beta} (1+r) } {\beta  [ B(\alpha, 1 / \beta)]^4 }
		\sum_{k=0}^{\infty}  F_k  B\left(4\alpha+k-\frac{1}{\beta}, \frac{1}{\beta} \right)   \label{proof6} .    
	\end{align}  
	
	\noindent  The corresponding results when   $x \le 0$ will be acquired in a similar way.   
	Plugging  the integrals above into Equations (\ref{hjl2})-(\ref{huajiandelta}),  the proof of Theorem \ref{L1234} is easily completed.       		

\subsection{Proof of Proposition \ref{ppp4}}   \label{appen.prop2}  
   
\begin{proof}
		For simplicity, we only discuss results (1) and (6). Then the results (2)-(5) and (7) follow immediately from result (1) and formula (\ref{fsgt}).   \\             
	(1). For fixed $ x, \mu, \sigma, r $ and $\beta$, note that        
	$$	\lim_{\alpha  \rightarrow \infty}   
	f_{SkeGTD} (x)   \propto   \      
	\lim_{\alpha  \rightarrow \infty} 
	\left\{  1+\frac{|x-\mu|^{\beta}}{2\alpha \sigma ^\beta [ 1+r {\rm sign} (x- \mu )]^ { \beta}  } \right\}^{-(\alpha+1/\beta)} =   \  
	\lim_{\alpha  \rightarrow \infty}
	\left(1+ \frac{ c }{\alpha}  \right)^{- \frac{\alpha}{c}\cdot c} ,  $$     
	\noindent where constant $ c=  |x-\mu|^{\beta} / \{ 2 \sigma ^\beta [ 1+r {\rm sign} (x- \mu ) ] ^ { \beta}   \} $ ,
	then we can	derive the density in (\ref{ fsgn }).    \\    
	(6). Without loss of generality, we assume $\mu =0 $ and $\sigma =1 $. Based on the result (1) of Proposition \ref{ppp4}, we only need to verify: 
	\begin{equation} 
		\lim_{\beta  \rightarrow \infty } \frac{ \beta }{ 2^{ 1+1/ \beta } \Gamma(1/\beta) } 
		\exp \left\{ - \frac{|x|^\beta}{ 2 } \right\}  = \frac{1}{2}~I\left(x<1 \right),  \label{uniform} 
	\end{equation}  
	where $I(\cdot)$ denotes the indicator function. 
	Note that 
	\begin{equation*}
		\lim_{\beta  \rightarrow \infty } \frac{ \beta }{ 2^{ 1+1/ \beta } \Gamma(1/\beta) } 
		=  \lim_{ t = \frac{1}{\beta} \rightarrow 0^+ } \frac{ 1 }{ 2^{ 1+t} \cdot t \Gamma(t) } 
		= \lim_{ t \rightarrow 0^+ } \frac{ sin(\pi t) \Gamma(1-t) }{2 \cdot \pi t} = \frac{1}{2} , 
	\end{equation*} 
	then Equation (\ref{uniform}) follows. This completes the proof of result (6) of Proposition \ref{ppp4}.       
\end{proof}

\begin{table}[H]  
	\small 
	\setlength{\abovecaptionskip}{0pt} 
	\setlength{\belowcaptionskip}{2pt}  
	\caption{ Rbias and MSE of MLEs and LMEs:  $ r= -0.5, \alpha= 4, $ and $ \beta= 2.5  $. }   \label{3can.2}          
	\setlength{\tabcolsep}{3.3mm} { 
		\begin{tabular} {lcccccc} 
			\toprule     
			\textbf{MLE}  & \multicolumn{2}{c}{ $ \bm{r =-0.5} $} & \multicolumn{2}{c}{ $\bm{\alpha = 4 }$ }  & \multicolumn{2}{c}{ $\bm{\beta = 2.5  }$ }  \\ \cmidrule{2-7}  
			\textbf{n} & $\bm{\hat{r}}$ (\textbf{Rbias})  & \textbf{MSE} & $\bm{\hat{\alpha}}$  (\textbf{Rbias})  & \textbf{MSE} & \bm{$\bm{\hat{\beta}}$} (\textbf{Rbias}) & \textbf{MSE}  \\  
			\midrule      
			20 & -0.5166 (0.2081)  &  0.0179  & 4.2942 (0.1233) & 0.2460 & 2.5159 (0.0682) & 0.0373   \\    
			50 &  -0.5026 (0.1158)  &  0.0054  & 4.2214 (0.1210) & 0.2398 & 2.5179 (0.0614) & 0.0310   \\    
			100 &  -0.5050 (0.0871)  &  0.0029  & 4.1819 (0.1201) & 0.2371 & 2.5082 (0.0611)  & 0.0207    \\  
			200 & -0.5011 (0.0535) & 0.0012 & 4.1168 (0.1192) & 0.2342 & 2.5170 (0.0478) & 0.0215 \\   
			500 & -0.5004 (0.0386) & 0.0006 & 4.0754 (0.1164) & 0.2267 & 2.5129 (0.0368) &  0.0133  \\      
			\rule{0pt}{10pt}  
			\textbf{LME} & \multicolumn{2}{c}{ $ \bm{r =-0.5} $} & \multicolumn{2}{c}{ $\bm{\alpha = 4 }$ }  & \multicolumn{2}{c}{ $\bm{\beta = 2.5 }$ }  \\ \cmidrule{2-7}  
			\textbf{n} & $\bm{\hat{r}}$ (\textbf{Rbias})  & \textbf{MSE} & $\bm{\hat{\alpha}}$  (\textbf{Rbias})  & \textbf{MSE} & \bm{$\bm{\hat{\beta}}$} (\textbf{Rbias}) & \textbf{MSE}  \\ \midrule
			20 & -0.5430 (0.2775)  & 0.0304 & 3.8325 (0.0493) & 0.0780 & 2.4267 (0.0634) & 0.0414  \\    
			50 & -0.5328 (0.1700) & 0.0114 & 3.8291 (0.0445) & 0.0483 & 2.4652 (0.0402) & 0.0388 \\   
			100 & -0.5338 (0.1265) & 0.0064 & 3.8314 (0.0438) & 0.0373 & 2.4919 (0.0294) & 0.0300 \\       
			200 & -0.5388 (0.1067) & 0.0043  & 3.8259 (0.0360) & 0.0364 & 2.4979 (0.0231) &  0.0264 \\   
			500 & -0.5330 (0.0775) & 0.0022 & 3.8236 (0.0352) & 0.0358 & 2.5068 (0.0161) & 0.0237 \\       
			\bottomrule     
	\end{tabular}      } 
\end{table}

\begin{table}[H]  
	\small   
	\setlength{\abovecaptionskip}{0pt}  
	\setlength{\belowcaptionskip}{2pt} 
	\caption{ Rbias and MSE of MLEs and LMEs:  $ r= 0.7, \alpha= 3, $ and $ \beta= 2.5  $.   }           
	\setlength{\tabcolsep}{3.3mm} {  
		\label{3can.5}  
		\begin{tabular} {lcccccc} 
			\toprule      
			\textbf{MLE}  & \multicolumn{2}{c}{ $ \bm{r = 0.7} $} & \multicolumn{2}{c}{ $\bm{\alpha = 3 }$ }  & \multicolumn{2}{c}{ $\bm{\beta = 2.5  }$ }  \\ \cmidrule{2-7}  
			\textbf{n} & $\bm{\hat{r}}$ (\textbf{Rbias})  & \textbf{MSE} & $\bm{\hat{\alpha}}$  (\textbf{Rbias})  & \textbf{MSE} & \bm{$\bm{\hat{\beta}}$} (\textbf{Rbias}) & \textbf{MSE}  \\  
			\midrule     
			20 & 0.6501 (0.1105)  &  0.0129  & 3.2641 (0.2884) & 0.8026 & 2.5384 (0.0573) & 0.0310    \\    
			50 &  0.6895 (0.0699)  &  0.0035  & 3.2867 (0.2705) & 0.7296 & 2.5109 (0.0541) & 0.0295   \\           
			100 & 0.7041 (0.0498)  &  0.0018  & 3.2423 (0.2506) & 0.6519 & 2.5127 (0.0538)  & 0.0221    \\       
			200 & 0.7016 (0.0361) & 0.0010 & 3.2355 (0.2338) & 0.5878 & 2.5184 (0.0530) & 0.0218    \\  
			500 & 0.7007 (0.0213) & 0.0003 & 3.1302 (0.2168) & 0.5203 & 2.5215 (0.0502) &  0.0203  \\   
			\rule{0pt}{10pt}   
			\textbf{LME} & \multicolumn{2}{c}{ $ \bm{r = 0.7} $} & \multicolumn{2}{c}{ $\bm{\alpha = 3 }$ }  & \multicolumn{2}{c}{ $\bm{\beta = 2.5 }$ }  \\ \cmidrule{2-7}  
			\textbf{n} & $\bm{\hat{r}}$ (\textbf{Rbias})  & \textbf{MSE} & $\bm{\hat{\alpha}}$  (\textbf{Rbias})  & \textbf{MSE} & \bm{$\bm{\hat{\beta}}$} (\textbf{Rbias}) & \textbf{MSE}  \\
			\midrule 
			20 & 0.7417 (0.1561)  &  0.0167  & 2.8115 (0.0988) & 0.1472 & 2.3034 (0.0935) & 0.0764   \\    
			50 &  0.7548 (0.1222)  &  0.0114  & 2.7797 (0.0915) & 0.0989 & 2.3259 (0.0736) & 0.0431   \\    
			100 & 0.7708 (0.1173)  &  0.0104  & 2.7331 (0.0910) & 0.0855 & 2.3306 (0.0691)  & 0.0342    \\       
			200 & 0.7702 (0.1079) & 0.0088 & 2.7254 (0.0821) & 0.0780 & 2.3394 (0.0660) & 0.0297    \\  
			500 & 0.7791 (0.1034) & 0.0083 & 2.6922 (0.0726) & 0.0746 & 2.3459 (0.0637) &  0.0269  \\            
			\bottomrule     
	\end{tabular}   }      
\end{table}

\begin{table}[H]   
	\setlength{\abovecaptionskip}{0pt}  
	\setlength{\belowcaptionskip}{2pt}  
	\caption{ Rbias and MSE of TSE of  $(\mu, \sigma, r, \alpha, \beta)= (4, 2, 0.7, 4, 2) $ for the SkeGTD.   }           
		\setlength{\tabcolsep}{0.83mm} { 
	    \scriptsize  
		\label{5can.3} 
		\begin{tabular} {lcccccccccc} 
			\toprule      
			\textbf{TSE} & \multicolumn{2}{c}{ $ \bm{\mu = 4} $}   & \multicolumn{2}{c}{ $ \bm{\sigma = 2} $}  &   \multicolumn{2}{c}{ $ \bm{r = 0.7} $} & \multicolumn{2}{c}{ $\bm{\alpha = 4 }$ }  & \multicolumn{2}{c}{ $\bm{\beta = 2 }$ }  \\   \cmidrule{2-11}      
			\textbf{n}  & $\bm{\hat{\mu}}$ (\textbf{Rbias})  & \textbf{MSE}    & $\bm{\hat{\sigma}}$ (\textbf{Rbias})  & \textbf{MSE}    & $\bm{\hat{r}}$ (\textbf{Rbias})  & \textbf{MSE} & $\bm{\hat{\alpha}}$  (\textbf{Rbias})  & \textbf{MSE} & \bm{$\bm{\hat{\beta}}$} (\textbf{Rbias}) & \textbf{MSE}  \\  
			\midrule        
			20 & 4.0110	(0.0505) &	0.0533 &	2.2731	(0.2143) &	0.2751 &	0.7102	(0.1780) &	0.0258 &	4.4184	(0.1057) &	0.1976 &	2.3709	(0.2065) &	0.1925  \\     
			50   & 3.6950	(0.0856) &	0.1654 &	2.2472	(0.1757) &	0.1868 &	0.8194	(0.1947) &	0.0250  &	4.3891	(0.1008) &	0.1832 &	2.3294	(0.1940) &	0.1773 \\ 
			100  & 3.7540	(0.0661)  &	0.0803 & 	2.2216	(0.1683)  & 	0.1614 &	0.7990	(0.1503) &	0.0144 &	4.3500	(0.0936) &	0.1643 &	2.2710	(0.1816) &	0.1597 \\ 
			200  &  3.5909	(0.1034) &	0.1856 &	2.1460	(0.1508) &	0.1257 &	0.8470	(0.2108) &	0.0239 &	4.3422	(0.0786) &	0.1407 & 	2.2040	(0.1559) &	0.1276 \\ 
			500  & 3.5505	(0.1124) &	0.2081 &	2.0939	(0.1263) &	0.0929 &	0.8606	(0.2294) &	0.0265 &	4.3261	(0.0660) &	0.1032	 &  2.1538	(0.1262) &	0.0895  
			\\        
			\bottomrule     
		\end{tabular}        } 
	\end{table}

	\begin{table}[H]      
		\setlength{\abovecaptionskip}{0pt}  
		\setlength{\belowcaptionskip}{2pt}  
		\caption{ Rbias and MSE of TSE  of  $(\mu, \sigma, r, \alpha, \beta)= (-4, 2, 0.7, 4, 2) $ for the SkeGTD.   }           
			\setlength{\tabcolsep}{0.83mm} {  
		    \scriptsize 
			\label{5can.4} 
			\begin{tabular} {lcccccccccc} 
				\toprule      
				\textbf{TSE} & \multicolumn{2}{c}{ $ \bm{\mu = -4} $}   & \multicolumn{2}{c}{ $ \bm{\sigma = 2} $}  &   \multicolumn{2}{c}{ $ \bm{r = 0.7} $} & \multicolumn{2}{c}{ $\bm{\alpha = 4 }$ }  & \multicolumn{2}{c}{ $\bm{\beta = 2 }$ }  \\   \cmidrule{2-11}      
				\textbf{n}  & $\bm{\hat{\mu}}$ (\textbf{Rbias})  & \textbf{MSE}    & $\bm{\hat{\sigma}}$ (\textbf{Rbias})  & \textbf{MSE}    & $\bm{\hat{r}}$ (\textbf{Rbias})  & \textbf{MSE} & $\bm{\hat{\alpha}}$  (\textbf{Rbias})  & \textbf{MSE} & \bm{$\bm{\hat{\beta}}$} (\textbf{Rbias}) & \textbf{MSE}  \\  
				\midrule       
				20 & 	-4.1444	(0.0670)	&  0.1050 & 	2.2608	(0.2153) & 	0.2766	&   0.7805	(0.2056) & 	0.0311 & 	4.4218	(0.1069) & 	0.2009 & 	2.3765	(0.2100) & 	0.1969 \\
				50 & 	-4.2926	(0.0848) & 	0.1329	 &  2.2341	(0.1729) & 	0.1800 & 	0.8313	(0.2006) & 	0.0259 & 	4.3941	(0.1014) & 	0.1855 & 	2.3356	(0.1963) & 	0.1791 \\
				100 & 	-4.3859	(0.1002) & 	0.1779 & 	2.2090	(0.1620) & 	0.1531 & 	0.8487	(0.2146) & 	0.0255 & 	4.3674	(0.0961) & 	0.1704 & 	2.2934	(0.1852) & 	0.1636 \\
				200 & 	-4.4464	(0.1118) & 	0.2148 & 	2.1713	(0.1447) & 	0.1148 & 	0.8652	(0.2361) & 	0.0248 & 	4.3460	(0.0900) & 	0.1530  & 	2.2605	(0.1677) & 	0.1416 \\
				500 & 	-4.4942	(0.1236) & 	0.2536 & 	2.1173	(0.1287) & 	0.0914 & 	0.8736	(0.2480) & 	0.0108 & 	4.3154	(0.0825) & 	0.1305 & 	2.2103	(0.1451) & 	0.1119 
				\\        
				\bottomrule     
			\end{tabular}    }         
		\end{table}  

\begin{table}[htbp]     
	\centering    	\small 	    
	 \begin{threeparttable}
	\caption{ Percentages of selecting the SkeGTD in 1000 simulations according to the AIC, BIC, and EDC criteria.   }       
	\label{tablesimu2}        
	\setlength{\tabcolsep}{1.49mm}{      
		\begin{tabular}{lccccc}    
			\toprule \\   
			\textbf{criteria} & \textbf{n}  & \textbf{ SkeGTD vs N } & \textbf{ SkeGTD vs $t$ } & \textbf{ SkeGTD vs SN}  & \textbf{ SkeGTD vs ST }  
			\\  \midrule   
			AIC(\%)      & 20  & 90.8        & 80.2        & 73.4         & 55.3         \\
			& 50  & 99.5        & 78.8        & 92.5         & 68.5         \\
			& 100 & 99.9        & 80.5        & 99.2         & 74.0         \\
			& 200 & 100         & 82.1        & 100          & 71.0         \\
			& 500 & 100         & 89.9        & 100          & 76.4         \\
			\addlinespace[4pt]  
			BIC(\%)     & 20  & 86.9        & 79.8        & 66.4          & 45.3         \\
			& 50  & 99.3        & 78.1        & 89.1         & 62.2         \\
			& 100 & 99.9        & 80.1        & 98.5         & 69.6         \\
			& 200 & 100         & 81.2        & 100          & 67.5         \\
			& 500 & 100         & 89.7        & 100          & 73.2         \\ 
			\addlinespace[4pt]     
			EDC(\%)     & 20  & 93.5        & 80.7        & 78.0          & 63.8         \\
			& 50  & 99.5        & 78.9        & 93.3         & 70.1         \\
			& 100 & 99.9        & 80.5        & 99.2         & 74.0         \\
			& 200 & 100         & 82.0        & 100          & 70.0         \\
			& 500 & 100         & 89.7        & 100          & 75.0        
			\\ 	\bottomrule      
		\end{tabular}   
	}    
\begin{tablenotes}
	\item[]  AIC(\%), BIC(\%), and EDC(\%) denote the SkeGTD's best performance percentages.   
\end{tablenotes}   
	\end{threeparttable} 
\end{table}

\begin{table}[H]   
	\centering      
	\setlength{\abovecaptionskip}{5pt} 
	\setlength{\belowcaptionskip}{2pt}  
	\caption{ A description  of Roller data. n, $\sqrt{b_1}$ and $b_2$ represent sample  size,  sample skewness and kurtosis coefficients,  respectively. }        
	\label{summary roller}   
	\setlength{\tabcolsep}{6mm} {   
		\begin{tabular} {cccccccc} 
			\toprule      
			\bm{$n$} & \bm{$\bar{Y}$} & \bm{$S^2$ } &   \bm{$\sqrt{b_1}$} &  \bm{ $b_2$}  & \textbf{min(\bm{$Y$}) } & \textbf{max(\bm{$Y$}) }  \\     
			1150 &  3.535 & 0.422 & -0.988 & 1.877 &  0.237  & 5.150   \\
			\bottomrule          
	\end{tabular}     }         
\end{table}

\begin{table}[H]   
	\centering    	\small 	  
	\setlength{\abovecaptionskip}{2pt} 
	\setlength{\belowcaptionskip}{2pt}   
	\caption{ Estimated parameters, log-likelihood and the AIC values for SN,  SFN,  ASGT and SkeGTD models in roller datasets. }          
	\label{aic}    
	\setlength{\tabcolsep}{0.01mm} {  
		\begin{tabular} {lccccccc}  
			\toprule   
			\textbf{Models } & \bm{$\hat{\mu}$  } & \bm{$\hat{\sigma}$} & \bm{$\hat{r}$} & \bm{ $\hat{\alpha}$} & \bm{ $\hat{\beta}$ } & \textbf{LogL} & \textbf{ AIC} \\  
			\midrule   
			SN & 4.248 & 0.964 & - & -2.758 & - & -1071.347 \  & 2148.694  
			\\  
			SFN & 3.976   &  1.701  & - & -2.743  &  2.310 & -1064.348 \  &  2136.696   
			\\
			ASGT & 3.784  &   0.621  &  0.3877 & 1.5208   &  5.8424 & -1072.182 \  & 2154.365   
			\\  		
			\specialrule{0em}{2pt}{2pt} 
			SkeGTD & 3.837 & 0.446  & -0.322  &  10.000     
			&  1.565&  -1062.446  \ 
			&  2134.891
			\\    
			(BCI)   &   
			(3.776,3.899)      &  
			(0.372,0.519)      &
			(-0.391,-0.252)   &
			(5.819,14.181)    &   
			(1.280,1.850)     & -  &  -   
			\\   			
			\bottomrule       
	\end{tabular}      } 
\end{table}

\begin{figure}[H] 	   
	\vspace{0cm}  
	\setlength{\abovecaptionskip}{-20pt} 
	\subfigtopskip=2pt  
	\subfigbottomskip=30pt  
	\subfigcapskip=0pt 
	\subfigure[] { 
		\includegraphics[width=5 cm]{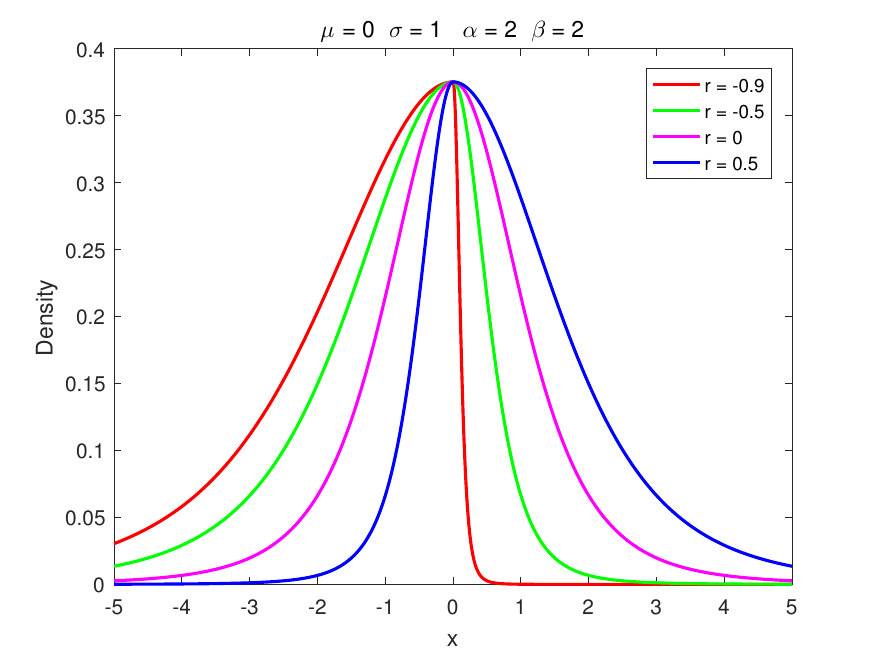}  \label{fig1} 
	}   
	\subfigure[] {  
		\includegraphics[width=5 cm]{1118rr-eps-converted-to.pdf}  \label{fig2} 
	}  
	\subfigure [] {    
		\includegraphics[width=5 cm ]{1118rr-eps-converted-to.pdf}  \label{fig3}   
	}     
	\caption{ Probability density functions of the SkeGTD for selected values of the parameters $r, \alpha $ and $\beta$. 
		(a)  Effects of skewness parameter $r$ with fixed  $ \mu=0, ~ \sigma=1,~ \alpha=2,~ \beta=2$; (b)  Effects of shape parameter $\alpha$ with fixed $\mu=0,~ \sigma=1,~ r=0.5,~ \beta=2 $; (c) Effects of shape parameter $\beta$ with fixed $\mu=0,~ \sigma=1,~ r=0.5,~ \alpha=2 $.      
	}    
	\label{plotf} 
\end{figure}

\begin{figure}[htb]     	
	\centering 
	\vspace{0cm}  
	\setlength{\abovecaptionskip}{-20pt}  
	\subfigtopskip=2pt  
	\subfigbottomskip=30pt  
	\subfigcapskip=0pt  
	\subfigure[]{  
		\includegraphics[width=3.7cm,height=2.8cm ]{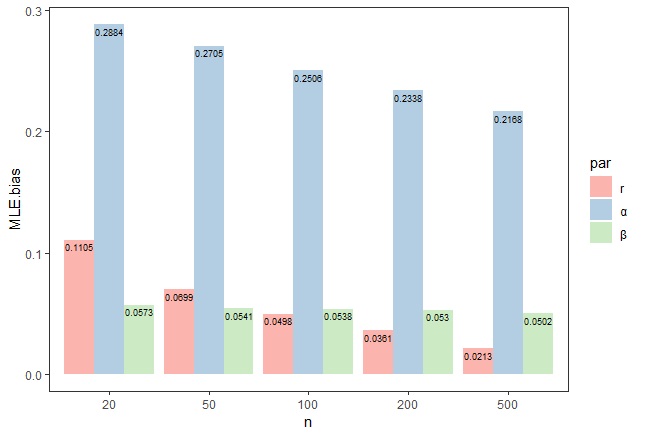}  \label{bar1} }    
	\subfigure[]{    
		\includegraphics[width=3.7cm,height=2.8cm ]{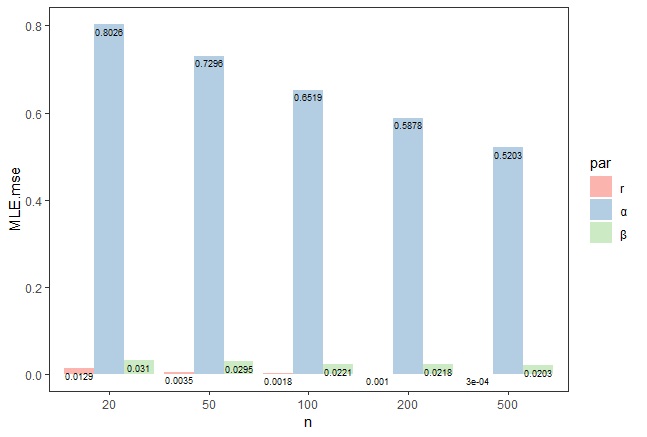} 
		\label{bar2}    
	} 
	\subfigure[]{   
		\includegraphics[width=3.7cm,height=2.8cm ]{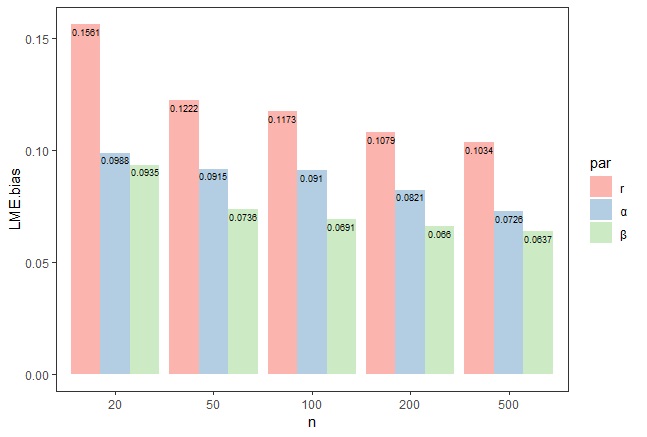}  \label{bar3}  
	}   
	\subfigure[]{   
		\includegraphics[width=3.7cm,height=2.8cm ]{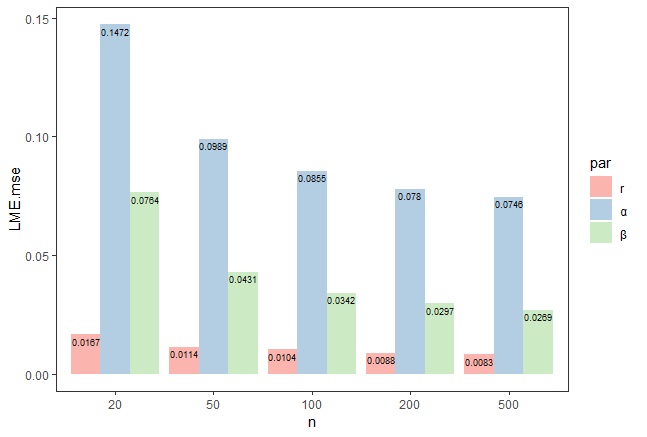}  \label{bar4}  
	} 
	\caption{ Rbias and MSE of MLEs and LMEs of $ (r, \alpha, \beta)  (r = 0.7, \alpha = 3, \beta = 2.5)  $.  
		(a)  Rbias of the MLE; 
		(b)  MSE of the MLE; (c) Rbias of the LME; (d) MSE of the LME.     	   
	}  	\label{bar}     
\end{figure}

\begin{figure}[H]  
	\centering    
	\includegraphics[width=8 cm]{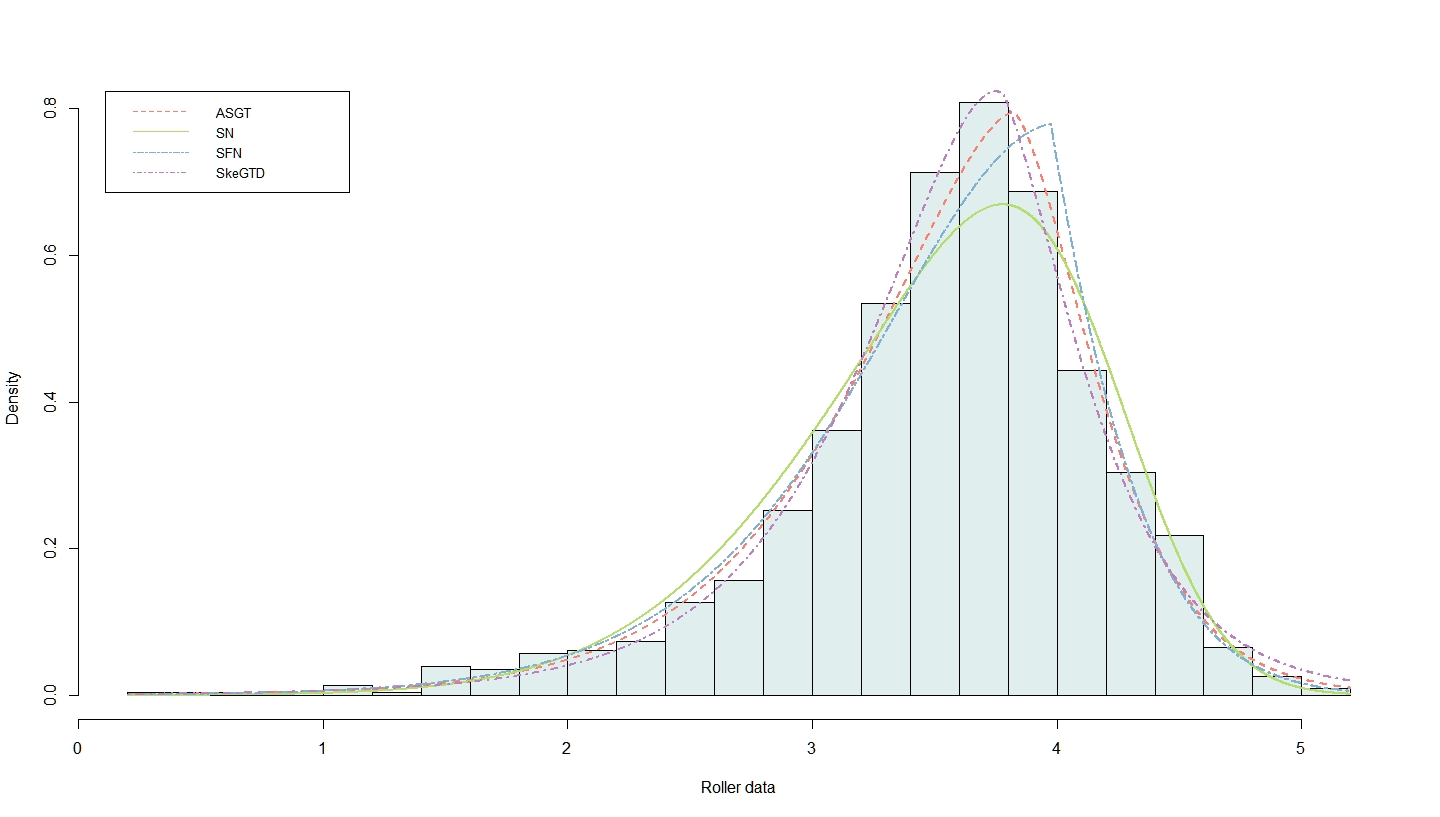}    
	\caption {  Histogram of the roller data,  together with the fitted SN, SFN, ASGT and SkeGTD  densities.
	}         
	\label{roller}    
\end{figure}

\begin{figure}[H] 	  
	\centering   
	\vspace{0cm}    
	\setlength{\abovecaptionskip}{-20pt}  
	\subfigtopskip=2pt   
	\subfigbottomskip=30pt  
	\subfigcapskip=0pt 
	\subfigure[] {  
		\includegraphics[width=6cm]{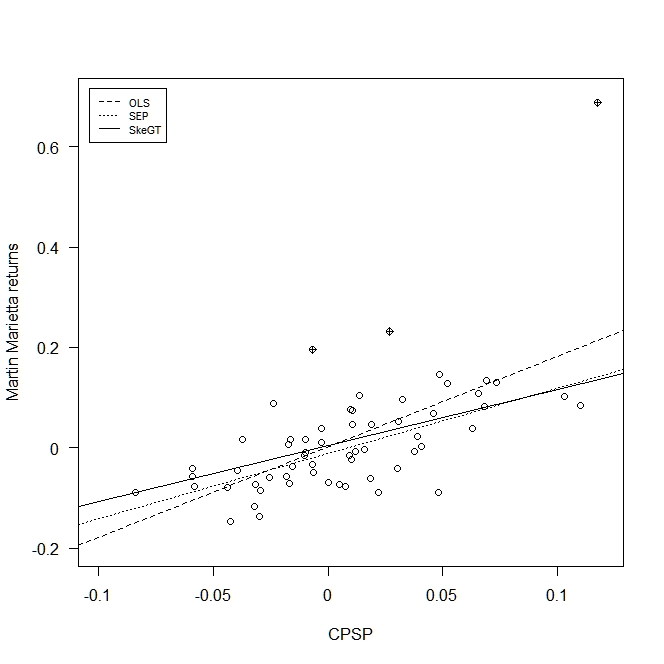} \label{fig:scar1} 
	}  
	\subfigure[] {  
		\includegraphics[width=6cm]{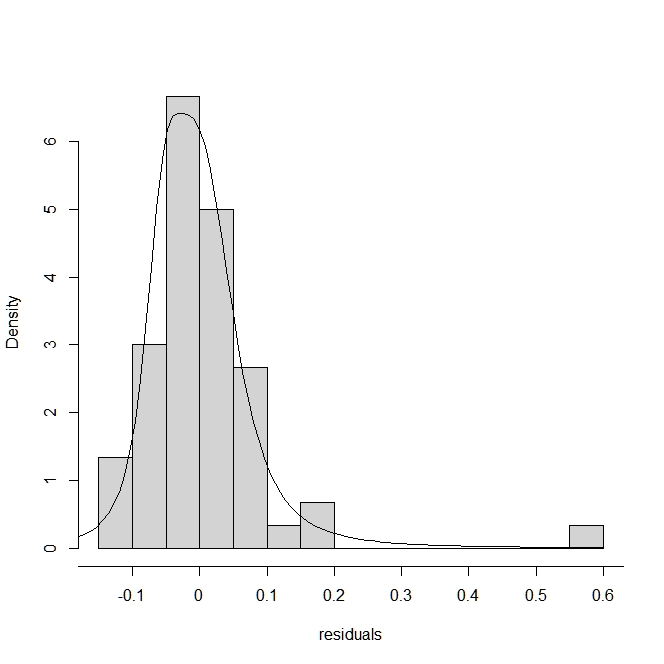}
		\label{fig:hist1}  
	}     
	\caption{  Martin Marietta data. (a) Scatter plot and fitted regression lines.  (b)   Histogram of the residuals with the fitted SkeGTD density.       }   \label{MMdata}                
\end{figure}

\subsection*{\centering List of Figures}  

	\begin{compactitem}  
		\item   	Figure 1:    
		Probability density functions of the SkeGTD for some selected values of the parameters $r,~ \alpha $ and $\beta$. 
		(a)  Effects of skewness parameter $r$ with fixed  $ \mu=0, ~ \sigma=1,~ \alpha=2,~ \beta=2$; (b)  Effects of shape parameter $\alpha$ with fixed $\mu=0,~ \sigma=1,~ r=0.5,~ \beta=2 $; (c) Effects of shape parameter $\beta$ with fixed $\mu=0,~ \sigma=1,~ r=0.5,~ \alpha=2 $. 
		\item 			Figure 2: 
		Rbias and MSE of MLEs and LMEs of $ (r, \alpha, \beta)  (r = 0.7, \alpha = 3, \beta = 2.5)  $.  
		(a)  Rbias of the MLE; 
		(b)  MSE of the MLE; (c) Rbias of the LME; (d) MSE of the LME.  
		\item 			Figure 3: 
		Histogram of the roller data,  together with the fitted SN, SFN, ASGT and SkeGTD  densities.  
		\item 			Figure 4: 
		Martin Marietta data. (a) Scatter plot and fitted regression lines.  (b)   Histogram of the residuals with the fitted SkeGTD density.  
	\end{compactitem}

\end{document}